\DeclareMathOperator{\Hom}{Hom}
\DeclareMathOperator{\Com}{Com}
\DeclareMathOperator{\CUP}{Cup}
\DeclareMathOperator{\II}{I}
\DeclareMathOperator{\1}{id}
\DeclareMathOperator{\Ker}{Ker}
\DeclareMathOperator{\IM}{Im}
\DeclareMathOperator{\ch}{char}
\renewcommand{\=}{:=}
\renewcommand{\t}{\otimes}
\renewcommand{\u}{\smile}
\renewcommand{\=}{:=}
\renewcommand{\t}{\otimes}
\renewcommand{\o}{\circ}
\renewcommand{\:}{\colon}
\newcommand{\ZZ}{\mathbb{Z}}
\newcommand{\CC}{\mathbb{C}}
\newcommand{\EEnd}{\mathcal End}
\newcommand{\EE}{\mathcal E}
\newcommand{\bul}{\circ}
\newcommand{\f}{f}
\newcommand{\h}{h}
\newcommand{\w}{\omega}
\newcommand{\p}{d}
\newcommand{\NN}{\mathbb{N}}
\newcommand{\de}{\delta}
\newtheorem{thm}{Theorem}[section]
\newtheorem{prop}[thm]{Proposition}
\newtheorem{cor}[thm]{Corollary}
\theoremstyle{definition}
\newtheorem{defn}[thm]{Definition}
\theoremstyle{definition}
\newtheorem{exam}[thm]{Example}
\theoremstyle{definition}
\newtheorem{rem}[thm]{Remark}
\theoremstyle{definition}
\newtheorem{disc}[thm]{Discussion}
\numberwithin{equation}{section}
\numberwithin{table}{section}
\theoremstyle{theorem}
\newtheorem*{MOD1}{MOD I}
\newtheorem*{MOD2}{MOD II}
\newtheorem*{MOD3}{MOD III}
\begin{document}
\title{\Large\bf The operadic modeling of gauge systems\\of the Yang-Mills type}
\date{}
\author{Eugen Paal
}

\maketitle
\thispagestyle{empty}
\begin{abstract}
The basics of operadic variational formalism is presented which is necessary when modeling the operadic systems. A  general gauge theoretic approach to the abstract operads, based on the physical measurements concepts,  is justified and considered. It is explained how  the matrix and Poisson algebra relations can be extended to operadic realm. The tangent cohomology spaces of the binary associative flows with their Gerstenhaber algebra  structure can be seen as equally natural objects for operadic modeling, just as the matrix and Poisson algebras in conventional modeling.  In particular, the relation of the tangent  Gerstenhaber algebras to operadic Stokes law for operadic observables is revealed and discussed. Based on this, the \textit{rational (cohomological) variational principle} and \textit{operadic Heisenberg equation} for quantum operadic flows are stated. As a modeling selection rule,  the  operadic gauge equations of the Yang-Mills type are considered and justified from the point of view of the physical measurements and the algebraic deformation theory. It is also shown how the binary weakly non-associative operations are related to approximate operadic (anti-)self-dual models.
\end{abstract}

\section{Introduction and outline of the paper}

It is well known how the Lie bracketing for matrices $f$ and $g$ is related to the matrix multiplication via  the (left and right) Leibniz rules as follows:
\[
 [h,fg]=[h,f]g+f[h,g], \quad 
 [fg,h]=f[g,h]+[f,h]g.
\]
Similar relations hold in the Poisson algebras that are everywhere used in (classical and quantum) physics. In a sense, the \textit{operad algebra} is a natural extension of the matrix and Poisson algebras to operadic realm, thus providing us with a natural well defined generalized (differential) calculus called the \textit{operad calculus}. 

Here it must be recalled the importance of the Leibniz rule - this is inevitably used for \textit{estimations} of the \textit{physical measurements errors}. Thus,  a general \textit{gauge} theoretic approach to operadic systems,  based on the physical measurements concepts, is needed. 

At first, let us list some related references.

In 1963, Gerstenhaber discovered \cite{Ger63} an operadic (pre-Lie) system in the Hochschild complex of an associative algebra  and used it to study of the algebraic structure of the cohomology of an associative algebra. Among others, Gerstenhaber proved a variant of the \textit{operadic Stokes  law} (see Sec. \ref{Stokes} for details) for the Hochschild cochains, which implies the Leibniz rule in the Hochschild cohomology. The notion of a symmetric operad was fixed and more formalized by May \cite{May72} as a tool for iterated loop spaces. In 1995 \cite{GV95a,GV95b}, Gerstenhaber and Voronov listed the main principles of the \textit{brace formalism} for the  \emph{operadic flows} in the Hochschild complex. Quite a remarkable research activity in the operad theory and its applications can be observed in the last decades, see e.g \cite{LV12,Fresse09,MaShSta01,Smi01,Rene97} and quite extensive  bibliographies therein.

In this paper, the basics of operadic variational formalism is presented which is necessary when modeling the operadic systems.  It is explained how  the matrix and Poisson algebra relations can be extended to operadic realm. The tangent cohomology spaces of binary associative operations with their Gerstenhaber algebra  structure can be seen as equally natural objects for operadic modeling, just as the matrix and Poisson algebras in conventional modeling.    In particular, the relation of the tangent Gerstenhaber algebras to operadic Stokes law for operadic flows is revealed and discussed. Based on this, the \textit{rational (cohomological) variational principle} and \textit{operadic Heisenberg equation} for quantum observable flows are stated. As a modeling selection rule,  the  \textit{operadic gauge equations} of the Yang-Mills type are considered and justified from the point of view of the physical measurements and the algebraic deformation theory. It is also shown how the binary weakly non-associative operations are related to approximate operadic (anti-)self-dual models.

\section{Operadic (composition) system}

Let $K$ be a unital associative commutative ring and let $C^n$ ($n\in\NN$) be unital $K$-modules. For  a (homogeneous) $f\in C^n$, we refer to $n$ as the \textit{degree} of $f$ and often write (when it does not cause confusion) $f$ instead of $\deg f$. For example, $(-1)^f\=(-1)^n$, $C^f\=C^n$ and $\o_f\=\o_n$. Also, it is convenient to use the \textit{reduced} (desuspended) degree $|f|\=n-1$. Throughout this paper, we assume that $\t\=\t_K$ and  $\Hom\=\Hom_K$. Sometimes, to simplify the presentation, we silently assume $\ch K\neq 2, 3$ or $K=\CC$.

\begin{defn}[BAG, Fig. \ref{fig_bag}, well known]
Define the maps $B,A,G: C^{h}\t C^{f} \mapsto\NN\times\NN$ sending $h\otimes f$ to
\begin{align*}
B(h\t f)&\= \braket{(i,j)\in\NN\times\NN \, | \,  1\leq i \leq |h|; \, 0\leq j \leq i-1},\\
A(h\t f)&\= \braket{(i,j)\in\NN\times\NN \, | \, 0\leq i \leq |h|; \, i\leq j \leq i+|f|},\\
G(h\t f)&\=\braket{(i,j)\in\NN\times\NN \, |\,  0\leq i \leq |h|-1;  \, i+f \leq j \leq |f|+|h|}.
\end{align*}
One can represent an image of $BAG\=B\sqcup A\sqcup G$ as a discrete rectangle in Figure \ref{fig_bag}, it is the \textit{disjoint} union of the triangles $B,G$ and parallelogram $A$.
\end{defn}

\begin{figure}[h]
\caption{$BAG$}
\setlength{\unitlength}{5mm}
\begin{center}
\begin{picture}(5,12)(0,0)
\linethickness{.45pt}
\qbezier(1,0)(3,0)(6,0)
\put(-0.2,-1){$0$}
\put(0.9,-1){$1$}
\put(5.7,-1){$|{\h}|$}
\qbezier(6,0)(6,2,5)(6,5)
\qbezier(1,0)(3.5,2.5)(6,5)
\put(5,0.5){${B}$}
\qbezier(0,0)(0,2)(0,4)
\put(-1,3.7){$|{\f}|$}
\qbezier(0,0)(3,3)(6,6)
\qbezier(6,6)(6,8)(6,10)
\qbezier(0,4)(3,7)(6,10)
\put(2.75,4.75){${A}$}
\qbezier(0,5)(0,5)(0,10)
\put(-0.7,4.8){${\f}$}
\put(-3,9.7){$|{\f}|+|{\h}|$}
\qbezier(0,5)(2.5,7.5)(5,10)
\qbezier(0,10)(2.5,10)(5,10)
\put(0.5,8.8){${G}$}
\put(7,0){\vector(1,0){1.5}}
\put(7.7,-1){$i$}
\put(0,11){\vector(0,11){1.5}}
\put(-0.7,11.5){$j$}
\end{picture}
\end{center}
\label{fig_bag}
\end{figure}

\begin{defn}
A (right) linear (pre-)\textit{operad} (operadic or composition system, non-symmetric operad, non-$\Sigma$ operad etc) with coefficients in $K$ is a sequence $C\=(C^n)_{n\in\NN}$ of unital $K$-modules (an $\NN$-graded $K$-module), such that for all $m\in\NN$ the following conditions hold:
\begin{enumerate}
\item[(1)]
For $0\leq i\leq m-1$ there exist linear maps called the (partial) \textit{compositions}
\[
  \o_i\in\Hom(C^m\t C^n,C^{m+n-1}),\quad |\o_i|=0.
\]
\item[(2)]
For all $h\t f\t g\in C^h\t C^f\t C^g$, the \textit{composition (associativity) relations} hold,
\[
(h\o_i f)\o_j g=
\begin{cases}
    (-1)^{|f||g|} (h\o_j g)\o_{i+|g|}f  &\text{if $(i,j)\in B$}\\
    h\o_i(f\o_{j-i}g)  &\text{if $(i,j)\in A$}\\
    (-1)^{|f||g|}(h\o_{j-|f|}g)\o_i f  &\text{if $(i,j)\in G$}                       
\end{cases}
\]
\item[(3)]
There exists a \textit{unit} $\II\in C^1$, also  called a \textit{root}, such that
\[
\II\o_0 f=f=f\o_i \II,\quad 0\leq i\leq |f|.
\]
\end{enumerate}
The homogeneous elements of $C$ are called  \textit{operations}.
\end{defn}

\begin{rem}
In item (2), subitems $B$ and $G$ turn out to be equivalent.
\end{rem}

\begin{rem}
One may collect the sequence of compositions as $\hat{\o}\=(\o_n)_{n\in\NN}$. The associativity of the latter has been exhaustively  explained in \cite{LV12,MaShSta01,Rene97}.
\end{rem}

\begin{exam}[endomorphism operad {\cite{Ger63}}]
\label{HG} 
Let $L$ be a unital $K$-module and
$\EE_L^n\={\EEnd}_L^n\=\Hom(L^{\t n},L)$. Define the partial compositions
for $f\t g\in\EE_L^f\t\EE_L^g$ by 
\[
f\o_i g\=(-1)^{i|g|}f\o(\1_L^{\t i}\t g\t\1_L^{\t(|f|-i)}),  \quad 0\leq i\leq |f|.
\]
Then $\EE_L\=(\EE_L^n)_{n\in\NN}$ is an operad with the unit $\II\=\1_L\in\EE_L^1$ called the  \textit{endomorphism operad} of $L$.  Thus, in particular, the algebraic operations can be seen as elements of an endomorphism operad. This motivates using the term \textit{operations} for homogeneous elements of an abstract operad.
\end{exam}

\begin{exam}
(1) Planar rooted trees, (2) cylindrical (biological) trees (3) strings, (4) little squares and disks etc (5) the dynamical growth  (change) of biological trees. One can find many examples with exhaustive explanations in \cite{LV12,MaShSta01,Rene97,May72} and references therein.
\end{exam}

\begin{defn}[representations]
A linear map $\Psi\in\Hom(C,\EE_L)$ is called a \textit{representation} of $C$ if
\[
\Psi_f \o_i \Psi_g = \Psi_{f\o_i g}, \quad i= 0,\ldots,|\psi_f| :=|f|.
\]
\end{defn}

In this restricted sense, \textit{operad algebra} means a representation of an operad. One may also say that algebras are representations of operads, the construction is similar to the concept of representation (realization) of groups and associative algebras. In modeling problems using representations and modules over operads is inevitable.

\medskip
In what follows we omit various associated prefixes, such as "pre-", "non-symmetric", "symmetric", "endomorphism" etc, and to cover a wider context we often use the flexible term  \textit{operadic system}.

\section{Operadic flows}

Now let $\mu\in C^2\subset C$ and call it a \textit{binary} operation, it may be considered as an elementary element, yet without any interior structure. Following Gerstenhaber \cite{Ger63} and \cite{KPS00,KP01, KP02},  we define the low-order \textit{ground simplexes} and \textit{operadic flows} associated with a given operadic system,  and start listning their basic properties.
\begin{defn}[ground simplexes]
Define the Gerstenhaber (discrete) \textit{ground simplexes} as follows: 
\begin{align*}
&\braket{h}= \braket{i \in \NN\, | \, 0 \leq i \leq |h|}, \\
&\braket{hf}\= \braket{(i,j) \in \NN^{2}\, | \, 0\leq i \leq |h|-1; \, i+f\leq j \leq |f|+|h|}=G(h\t f) ,\\
&\braket{hfg} \= \braket{(i,j,k) \in \NN^{3}\, | \, 0\leq i \leq |h-2 |; \, i+f \leq j \leq |h|+|f|-1; j+g\leq k \leq |h|+|f|+|g|}.
\end{align*}
We denote the (conventional simplicial) \emph{boundary operator} acting on the ground simplexes by  $\partial$.
\end{defn}

\begin{defn}[operadic flows]
The low-order  (simplicial) \textit{operadic flows} are defined as (pairing) superpositions over the corresponding ground simplexes, 
\begin{align*}
&\braket{h|f} \=h\bul f \= \sum_{\braket{h}} h\o_i f \quad \in C^{h+|f|}, \\
&\braket{h|fg} \= \braket{h|fg} \=\sum_{\braket{hf}}(h\o_i f)\o_j g \quad \in C^{h+|f|+|g|}, \\
&\braket{h|fgb} \= \braket{h|fgb}\=\sum_{\braket{hfg}}  ((h\o_{i}f)\o_{j}g)\o_{k}b \quad \in C^{h+|f|+|g|+|b|}.
\end{align*}
One can see that $|\braket{\cdot|\hphantom{\cdot}}|=0$. Evidently, every operation $f$ can be presented as the flow $f=\braket{\II|f}$. Sometimes it is useful to expose the number of ket-arguments, called the \textit{order} of an operadic flow, e.g, we may use  $\braket{\cdot|\cdot}\=\o$, $\braket{\cdot|\cdot\cdot}$,  $\braket{\cdot|\cdots}$ or similar notations. The pair $\Com C\=(C,\bul)$ is called the \textit{composition algebra} of $C$ with the \textit{total composition} $\o$. 
\end{defn}

\begin{rem}
The higher-order operadic  flows can be easily seen (formally generated) from superpositions of the planar rooted trees or, more sophistically, by using compositions in an endomorphism operad \cite{GV95a,GV95b}. In this paper, we do not use the higher-order flows.
\end{rem}

\begin{defn}[cup]
The \textit{cup-multiplication} $\u_\mu \: C^f\t C^g\to C^{f+g}$ is defined by
\[
f\u_\mu g\=(-1)^f(\mu\o_0 f)\o_f g \quad \in C^{f+|g|}
, \quad 
|\smile_\mu|=1.
\]
The pair $\CUP_\mu C\=(C,\smile_\mu)$ is called a $\u_\mu$-\textit{algebra} (\textit{cup-algebra}) of $C$ at $\mu$. 
\end{defn}

Note that one has a linear map $\mu\mapsto\smile_\mu$, extending $\mu$ as a graded binary operation $\smile_\mu$ of $C$, so that $\mu = -\II\smile\II$. To keep notations simple, when clear from  context, we may omit the subscript $\mu$ in notations and write, e.g, $\smile_\mu\=\smile$.

\begin{exam}
For an endomorphism operad (Example \ref{HG}) $\EE_L$ one has
\[
f\u g=(-1)^{fg}\mu\o(f\t g),
      \quad \mu\t f\t g\in\EE_L^2\t\EE_L^f\t\EE_L^g.
\]
\end{exam}

\begin{prop}
Denote $\mu^{2}\=\mu\bul\mu\in C^3$. One has
\begin{align*}
f\u g=(-1)^f\braket{\mu|fg}, \quad 
(f\smile g)\smile h-f\smile(g\smile h)=\braket{\mu^{2}|fgh}.
\end{align*}
\end{prop}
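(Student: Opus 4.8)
The plan is to exploit the fact that, because $|\mu|=1$ and $|\mu^2|=2$, the Gerstenhaber ground simplexes governing these two flows degenerate to single points; once this is seen, both identities reduce to applications of the composition relations of item (2) in the definition of an operad.

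For the first identity I would observe that in $\braket{\mu f}=G(\mu\t f)$ the condition $0\le i\le|\mu|-1$ forces $i=0$, and then $i+f\le j\le|f|+|\mu|$ collapses to $j=f$; hence $\braket{\mu f}=\braket{(0,f)}$ and $\braket{\mu|fg}=(\mu\o_0 f)\o_f g$. Comparing with the definition $f\smile g=(-1)^f(\mu\o_0 f)\o_f g$ gives $f\smile g=(-1)^f\braket{\mu|fg}$ at once.

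For the second identity I would first record that the total composition produces only two terms, $\mu^2=\mu\bul\mu=\mu\o_0\mu+\mu\o_1\mu$, since $\braket{\mu}=\braket{0,1}$. Next, in $\braket{\mu^2 fg}$ the constraint $0\le i\le|\mu^2|-2$ again forces $i=0$, whereupon the remaining inequalities pin $j=f$ and $k=f+g$; thus $\braket{\mu^2 fg}=\braket{(0,f,f+g)}$ and $\braket{\mu^2|fgh}=((\mu^2\o_0 f)\o_f g)\o_{f+g}h$. It then remains to expand the two associator terms and match. Using the first identity on both the inner and the outer cup together with the parallelogram ($A$) relations to pull $\mu$ across the compositions, $(f\smile g)\smile h$ reduces to the $\mu\o_0\mu$-part of $\braket{\mu^2|fgh}$. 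For $f\smile(g\smile h)$ the inner cup produces the factor $(\mu\o_0 f)\o_f(\mu\o_0 g)$, which I would bring to the same normal form by one triangle ($G$) reduction followed by an $A$-reduction (this is exactly where the second term $\mu\o_1\mu$ appears) and a final $B$-reduction; combining the two contributions in the associator and invoking $\mu^2=\mu\o_0\mu+\mu\o_1\mu$ then yields $\braket{\mu^2|fgh}$.

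The one delicate point is the sign bookkeeping: each $G$- and $B$-reduction contributes a Koszul factor $(-1)^{|f||g|}$, and these must be combined with the $(-1)^f$ and $(-1)^g$ coming from the two cup products (and with the desuspension convention $|\cdot|=\deg-1$) so that the $\mu\o_0\mu$ and $\mu\o_1\mu$ contributions enter with a common sign and assemble into a single flow $\braket{\mu^2|fgh}$. I expect this sign reconciliation, rather than the structural reduction (which is forced by the degeneracy of the two ground simplexes), to be the main obstacle; in particular one must check that the outer composition index $f+g$ attaching $h$ is left unchanged by the inner $G/B$ moves.
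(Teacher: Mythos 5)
Your identification of the two ground simplexes as single points is correct, and it settles the first identity completely: $\braket{\mu f}=\{(0,f)\}$ gives $\braket{\mu|fg}=(\mu\o_0 f)\o_f g$, which is $(-1)^f f\smile g$ by definition. (The paper states this Proposition without proof, so yours is the only argument on the table; the route is certainly the natural one.) Your reduction scheme for the second identity is also structurally right: $A$-moves collapse $(f\smile g)\smile h$ onto the $\mu\o_0\mu$ part, and the $G$-, $A$-, $B$-sequence applied to the factor $(\mu\o_0 f)\o_f(\mu\o_0 g)$ produces the $\mu\o_1\mu$ part, with the outer index $f+g$ untouched.

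The gap is exactly the step you postponed. The sign reconciliation is not routine bookkeeping that can be waved through; carried out with the paper's stated conventions it does \emph{not} return the formula as printed. Explicitly, the $A$-moves give
\begin{align*}
(f\smile g)\smile h=(-1)^{g}\bigl[((\mu\o_0\mu)\o_0 f)\o_f g\bigr]\o_{f+g}h,
\end{align*}
while on the other side the $G$-move contributes $(-1)^{|f|\,|Z|}=(-1)^{(f-1)g}$ (where $Z\=\mu\o_0 g$, $|Z|=g$) and the $B$-move contributes $(-1)^{|f|\,|g|}=(-1)^{(f-1)(g-1)}$, so that
\begin{align*}
f\smile(g\smile h)
&=(-1)^{f+g}(-1)^{(f-1)(2g-1)}\bigl[((\mu\o_1\mu)\o_0 f)\o_f g\bigr]\o_{f+g}h\\
&=(-1)^{g-1}\bigl[((\mu\o_1\mu)\o_0 f)\o_f g\bigr]\o_{f+g}h.
\end{align*}
The two contributions do assemble into a single flow, but with a surviving common prefactor:
\begin{align*}
(f\smile g)\smile h-f\smile(g\smile h)=(-1)^{g}\braket{\mu^{2}|fgh}.
\end{align*}
A degree-one check in the endomorphism operad confirms this is not an error in the reduction: for $\deg f=\deg g=\deg h=1$ one finds $(f\smile g)\smile h-f\smile(g\smile h)=-\mu^{2}\o(f\t g\t h)$, whereas $\braket{\mu^{2}|fgh}=+\mu^{2}\o(f\t g\t h)$. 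So either the Proposition carries a sign misprint and should read with the factor $(-1)^{g}$ (which would be consistent with the factor $(-1)^{f}$ in the first identity and with the factor $(-1)^{g}$ appearing in the third formula of Corollary \ref{OSL}), or conventions different from the printed ones are intended; in either case, a complete proof must carry the Koszul factors through and state which alternative holds. As submitted, your argument is open precisely at the point you yourself flagged as the main obstacle, and the expectation that the signs "enter with a common sign" so as to yield the statement verbatim is not borne out by the computation.
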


Thus, in general, $\CUP C$ is a \textit{non-associative} algebra. The ternary operation $\mu^{2}\in C^{3}$ is an obstruction to associativity of $\CUP C$ and is called an \textit{associator}. The binary operation (or flow) $\mu$ is said to be \textit{associative} if $\mu^{2}=0$. This term can be explained also by the following

\begin{exam}[associator]
For the endomorphism operad $\EE_L$, one can really recognize the associator:
\[
\mu^{2}=\mu\o(\mu\t\1_L-\1_L\t\mu)\quad \in\EE_L^3, \quad \mu\in\EE_L^2.
\]
\end{exam}

The variety of binary non-associative operations $C^{2}$ may be too wide for modeling purposes. 
Every submodule $M\subset C^2$ is called an \textit{operadic (binary) model}. A model $M\subset C^2$ is called \textit{associative} if it consists of only associative operations. 

\begin{rem}
Here, lets recall essential related aspects. In 1956 Kolmogorov  proved \cite{Kol56} that every \textit{continuous} multivariate function can be presented as a finite superposition of continuous ternary and binary functions. Arnold reinforced  this result in 1959 by proving \cite{Arn59} that every  \textit{continuous ternary} function can be presented as a finite superposition of continuous \textit{binary} functions. 

Thus, restriction to the operadic \textit{binary} models is reasonable, at least when modeling the \emph{continuous} systems. 
\end{rem}

\begin{disc}[associativity vs non-associativity]
\label{associativity}
Associativity is an algebraic abstraction of the notion of symmetry. The Nature prefers symmetric  (associative) forms of existence, for \emph{rational} evolution, and its algebra of observables is \emph{a priori} associative.  Still, due to various \textit{perturbations} (forces, interactions, measurements, mutations, approximations etc),  the observable (measured) symmetries are rarely exact  but often rather deformed or broken and an associator $\mu^2$ may be considered as representing the perturbations. The binary operations near (close) \cite{ZSSS78} to associative  emerge as operadic "approximations" $\mu^2\approx0$,  representing the weak perturbations of \textit{bare} symmetries. In particular,  e.g, if (in a model) the associativity reveals in the limit $\lim_{\hbar\to0}\mu^2=0$, then such a phenomenon is called an \textit{anomaly} or \textit{quantum symmetry breaking} \cite{Bert00}. The latter is considered as a lack of a model, not the Nature.
\end{disc}

It is difficult, at least technically,  to describe all binary non-associative perturbations and certain selection rules has to be applied according to particular application. 

In Sec. \ref{OGE}, the \textit{operadic gauge equations} are justified as a selection rule for \emph{non-associative} models, the forthcoming sections may be considered as preparatory.

\section{Gerstenhaber brackets}

\begin{prop}[Getzler identity]
In an operad $C$, the Getzler identity holds:
\[
(h,f,g)
\=(h\bul f)\bul g-h\bul(f\bul g)=\braket{h|fg}+(-1)^{|f||g|}\braket{h|gf}.
\]
\end{prop}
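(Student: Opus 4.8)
The plan is to expand both terms of the associator as explicit double sums of iterated compositions $(h\o_i f)\o_j g$ and to match the resulting index sets against the regions $B,A,G$ of the $BAG$ rectangle. First I would write $(h\bul f)\bul g=\sum_{i=0}^{|h|}\sum_{j=0}^{|h|+|f|}(h\o_i f)\o_j g$, using that $|h\o_i f|=|h|+|f|$ so that the outer composition $(h\o_i f)\bul g$ runs over $\braket{h\o_i f}=\{0,\dots,|h|+|f|\}$; the combined index set is the full discrete rectangle, which by the $BAG$ definition is the disjoint union $B\sqcup A\sqcup G$. Next I would expand $h\bul(f\bul g)=\sum_{i=0}^{|h|}\sum_{k=0}^{|f|}h\o_i(f\o_k g)$ and invoke the parallelogram ($A$-case) relation $h\o_i(f\o_{j-i}g)=(h\o_i f)\o_j g$: with the substitution $j=i+k$ (so that $i\leq j\leq i+|f|$) this rewrites the second sum as $\sum_{(i,j)\in A}(h\o_i f)\o_j g$.

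Subtracting, the parallelogram $A$ cancels exactly and the associator collapses onto the two triangles, $(h,f,g)=\sum_{(i,j)\in B\sqcup G}(h\o_i f)\o_j g$. The region-$G$ piece is then immediate: since $\braket{hf}=G(h\t f)$ by the definition of the ground simplex, I get $\sum_{(i,j)\in G}(h\o_i f)\o_j g=\braket{h|fg}$ directly from the definition of the second-order flow. It remains only to identify the region-$B$ piece with $(-1)^{|f||g|}\braket{h|gf}$.

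For the $B$ triangle I would apply the $B$-case relation $(h\o_i f)\o_j g=(-1)^{|f||g|}(h\o_j g)\o_{i+|g|}f$, pulling the sign $(-1)^{|f||g|}$ out of the sum. The core step is then a reindexing: the map $(i,j)\mapsto(p,q)\=(j,\,i+|g|)$ should be a bijection from $B=\{\,1\leq i\leq|h|,\;0\leq j\leq i-1\,\}$ onto $G(h\t g)=\braket{hg}$, which one checks by translating the two defining inequalities, using $g=|g|+1$ to see that $0\leq j\leq i-1$ becomes $p+|g|+1\leq q$ and $1\leq i\leq|h|$ becomes $q\leq|h|+|g|$ together with $0\leq p\leq|h|-1$. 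Under this bijection $\sum_{(i,j)\in B}(h\o_j g)\o_{i+|g|}f=\sum_{(p,q)\in\braket{hg}}(h\o_p g)\o_q f=\braket{h|gf}$, so the $B$-sum equals $(-1)^{|f||g|}\braket{h|gf}$, which combined with the $G$-sum yields the stated identity.

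The main obstacle is combinatorial bookkeeping rather than any algebraic subtlety: everything hinges on the fact asserted by Figure \ref{fig_bag} that the full rectangle partitions \emph{disjointly} as $B\sqcup A\sqcup G$, so that the three cases of the composition relations account for every term of the expanded associator exactly once. I would confirm this partition, e.g.\ by the cardinality count $\tfrac{|h|(|h|+1)}{2}+(|h|+1)(|f|+1)+\tfrac{|h|(|h|+1)}{2}=(|h|+1)(|h|+|f|+1)$, which matches the size of the rectangle, and then verify carefully that the $B\leftrightarrow\braket{hg}$ reindexing sends each summand to the correct one, paying attention to the fact that the roles of $f$ and $g$ are interchanged precisely in this triangle.
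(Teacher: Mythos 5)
Your proof is correct. The paper itself states the Getzler identity as a bare proposition with no proof, deferring implicitly to the cited literature (Gerstenhaber's original paper and \cite{KPS00,KP01,KP02}), so there is nothing to compare line by line; but your argument is precisely the computation that the paper's $BAG$ machinery is designed to support: expand $(h\bul f)\bul g$ over the full rectangle $B\sqcup A\sqcup G$, cancel the parallelogram $A$ against $h\bul(f\bul g)$ via the $A$-case composition relation, read off the $G$-triangle as $\braket{h|fg}$ from $\braket{hf}=G(h\t f)$, and convert the $B$-triangle into $(-1)^{|f||g|}\braket{h|gf}$ by the $B$-case relation plus the reindexing $(i,j)\mapsto(p,q)=(j,i+|g|)$. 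I verified your bijection $B\to G(h\t g)$: the inverse is $(p,q)\mapsto(i,j)=(q-|g|,p)$, and the inequalities translate exactly as you claim, with the only loose point being that the bound $0\le p\le|h|-1$ comes from combining $j\le i-1$ with $i\le|h|$ rather than from $1\le i\le|h|$ alone --- a cosmetic slip, not a gap. Your cardinality check $\tfrac{|h|(|h|+1)}{2}+(|h|+1)(|f|+1)+\tfrac{|h|(|h|+1)}{2}=(|h|+1)(|h|+|f|+1)$ is also correct and is a sensible sanity check on the disjointness of the partition, though the disjointness already follows directly from the defining inequalities (for fixed $i$, the $j$-ranges $[0,i-1]$, $[i,i+|f|]$, $[i+f,|f|+|h|]$ abut without overlap, since $i+f=i+|f|+1$).
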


The Getzler identity means that $\Com C$ is a non-associtive algebra, but still with a nice symmetry called the Vinberg identity.

\begin{cor}[Vinberg identity]
In $\Com C$ the  (graded right) \textit{Vinberg} identity holds,
\[
(h,f,g)=(-1)^{|f||g|}(h,g,f).
\]
\end{cor}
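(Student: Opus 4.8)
The plan is to derive this directly from the Getzler identity proved immediately above, treating the corollary as a formal consequence rather than reproving anything about the compositions themselves. The Getzler identity expresses the associator $(h,f,g)$ of $\Com C$ as a symmetric-looking combination of the two order-two flows $\braket{h|fg}$ and $\braket{h|gf}$, and the Vinberg symmetry should fall out simply by interchanging the roles of $f$ and $g$ and comparing.

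Concretely, I would first write the Getzler identity as stated,
\[
(h,f,g)=\braket{h|fg}+(-1)^{|f||g|}\braket{h|gf}.
\]
Then I would apply the same identity with $f$ and $g$ swapped, obtaining
\[
(h,g,f)=\braket{h|gf}+(-1)^{|g||f|}\braket{h|fg}.
\]
Multiplying this second equation through by $(-1)^{|f||g|}$ gives
\[
(-1)^{|f||g|}(h,g,f)=(-1)^{|f||g|}\braket{h|gf}+(-1)^{|f||g|}(-1)^{|g||f|}\braket{h|fg}.
\]
The key observation is that $(-1)^{|f||g|}(-1)^{|g||f|}=(-1)^{2|f||g|}=1$, so the right-hand side collapses to $(-1)^{|f||g|}\braket{h|gf}+\braket{h|fg}$, which is exactly the right-hand side of the Getzler expression for $(h,f,g)$. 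Comparing the two yields $(h,f,g)=(-1)^{|f||g|}(h,g,f)$, as required.

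There is essentially no hard part here: the entire argument is a one-line manipulation of signs, and the only thing to be careful about is the sign bookkeeping on the desuspended (reduced) degrees $|f|,|g|$ together with the fact that the parity factor is its own inverse. Since the Getzler identity has already done all the genuine combinatorial work of summing over the ground simplexes and invoking the composition relations, the Vinberg identity is a purely algebraic corollary and requires no further appeal to the structure of the operad.
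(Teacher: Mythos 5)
Your proposal is correct and follows exactly the route the paper intends: the Vinberg identity is an immediate formal consequence of the Getzler identity, obtained by swapping $f$ and $g$ and multiplying by $(-1)^{|f||g|}$, using that this sign squares to $1$. The sign bookkeeping on the reduced degrees is handled correctly, so nothing is missing.
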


Thus, $\Com C$ is a (graded right) \textit{Vinberg algebra}.

\begin{thm}
\label{Albert}
If $K$ is a field of characteristic $0$, then a binary operation $\mu\in C^{2}$ generates a power-associative subalgebra in $\Com C$. 
\end{thm}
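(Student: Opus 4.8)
The plan is to follow Albert's classical reduction of power-associativity to finitely many low-degree identities, which is available precisely because $\ch K=0$, and to verify those identities directly from the Getzler identity. Write $\mu^{[1]}\=\mu$ and $\mu^{[n+1]}\=\mu^{[n]}\bul\mu$ for the left-normed powers (so $\mu^{[2]}=\mu^{2}$), and recall that the subalgebra of $\Com C$ generated by $\mu$ is power-associative exactly when every bracketing of $n$ copies of $\mu$ equals $\mu^{[n]}$, which by an evident induction reduces to the statement $\mu^{[m]}\bul\mu^{[n]}=\mu^{[m+n]}$ for all $m,n\geq1$. Albert's theorem reduces this, in characteristic zero, to the two identities $(\mu,\mu,\mu)=0$ and $(\mu^{[2]},\mu,\mu)=0$, the characteristic-zero hypothesis being exactly what makes the accompanying linearizations invertible.

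The engine of the argument is a single observation. Applying the Getzler identity with $f=g=\mu$ and using that the reduced degree $|\mu|=1$ is odd, the two terms carry opposite signs and cancel:
\[
(h,\mu,\mu)=\braket{h|\mu\mu}+(-1)^{|\mu||\mu|}\braket{h|\mu\mu}=\braket{h|\mu\mu}-\braket{h|\mu\mu}=0
\]
for every $h\in C$. Equivalently, right composition $R_a(h)\=h\bul a$ obeys the operator identity $R_\mu^{2}=R_{\mu^{[2]}}$ on all of $C$. Specialising $h=\mu$ gives the first Albert identity $(\mu,\mu,\mu)=0$, hence $\mu^{[2]}\bul\mu=\mu\bul\mu^{[2]}$; specialising $h=\mu^{[2]}$ gives the second, $(\mu^{[2]},\mu,\mu)=0$, which rearranges to $\mu^{[2]}\bul\mu^{[2]}=\mu^{[3]}\bul\mu=\mu^{[4]}$. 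Thus both required identities follow at once from the odd parity of $\mu$.

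To propagate associativity to all powers one argues by induction on the weight, peeling a factor $\mu$ off $\mu^{[b]}=\mu^{[b-1]}\bul\mu$ and correcting by an associator $(\mu^{[a]},\mu^{[b-1]},\mu)$. When the middle entry has odd degree the correction is annihilated either by the engine (the case $\mu^{[1]}=\mu$) or by the Vinberg identity, which in operator form reads $R_\mu R_{\mu^{[n]}}-R_{\mu^{[n+1]}}=(-1)^{n}(R_{\mu^{[n]}}R_\mu-R_{\mu\bul\mu^{[n]}})$ and forces the discrepancy to vanish for odd $n$. The step I expect to be the genuine obstacle is the correction with an \emph{even}-degree middle factor, the first instance being $(h,\mu^{[2]},\mu)=0$: here the Getzler and Vinberg identities by themselves only re-express the associator as $\braket{h|\mu^{[2]}\mu}+\braket{h|\mu\mu^{[2]}}$ without forcing the sum to vanish. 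At this point one must expand $\mu^{[2]}=\mu\bul\mu$ and invoke the operad composition (associativity) relations — equivalently the brace / operadic Stokes identities that rewrite the second-order flows through the third-order flows $\braket{h|\mu\mu\mu}$ — to produce the cancellation, with char $0$ used to divide by the integer coefficients that the linearized identities generate. This combinatorial verification, rather than the formal induction, is where the real work lies.
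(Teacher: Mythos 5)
Your first two paragraphs are, in slightly expanded form, exactly the paper's proof: it invokes the Albert criterion and derives the two identities $(\mu,\mu,\mu)=0$ and $(\mu^{2},\mu,\mu)=0$ from the Vinberg/Getzler identity and the odd parity of $|\mu|$, just as you do. The paper stops there; you correctly sense that this is not enough, and the doubt you raise in your third paragraph is precisely where the argument breaks. The Albert criterion certifies power-associativity only when the identities $yy^{2}=y^{2}y$ and $(y^{2}y)y=y^{2}y^{2}$ hold for \emph{all} elements $y$ of the algebra being tested, because Albert's proof proceeds by linearizing them; checking them at the single generator $\mu$ does not suffice. Applied to the subalgebra generated by $\mu$, the first linearization (substitute $y=\mu+\la\mu^{2}$ and extract the coefficient of $\la$) demands $(\mu,\mu,\mu^{2})+(\mu,\mu^{2},\mu)+(\mu^{2},\mu,\mu)=0$; since $(\mu^{2},\mu,\mu)=0$ and $(\mu,\mu^{2},\mu)=(-1)^{|\mu^{2}||\mu|}(\mu,\mu,\mu^{2})=(\mu,\mu,\mu^{2})$ by the Vinberg identity, this is exactly the fourth-power identity $(\mu,\mu,\mu^{2})=0$, i.e. $\mu\bul\mu^{[3]}=\mu^{[4]}$ in your left-normed notation, which is the obstacle you single out.

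The genuine gap is that this identity is false in a general operad, so the ``combinatorial verification'' you defer cannot be carried out by anyone. By the Getzler identity, $(\mu,\mu,\mu^{2})=\braket{\mu|\mu\mu^{2}}+\braket{\mu|\mu^{2}\mu}$, and in an endomorphism operad $\EE_L$ this is the $5$-cochain $\mu\o(\mu\t\mu^{2})-\mu\o(\mu^{2}\t\mu)$. Take $L$ to be the free non-associative (magma) algebra on one generator $x$ over $\mathbb{Q}$ and let $\mu\in\EE_L^{2}$ be its multiplication, so that $\mu^{2}$ is its associator; evaluation at $x^{\t 5}$ gives $x^{2}(x^{2}x)-x^{2}(xx^{2})-(x^{2}x)x^{2}+(xx^{2})x^{2}$, a signed sum of four \emph{distinct} basis monomials, hence nonzero. (The same conclusion follows by brute force from the definition of the $\o_i$ in $\EE_L$: one finds that $\mu^{[4]}-\mu\bul\mu^{[3]}$ evaluated at $x^{\t5}$ equals $x^{2}A-Ax^{2}$ with $A=x^{2}x-xx^{2}$.) So for this $\mu$ one has $\mu^{[2]}\bul\mu^{[2]}=(\mu\bul\mu^{[2]})\bul\mu=\mu^{[4]}$ yet $\mu\bul\mu^{[3]}\neq\mu^{[4]}$: the Vinberg/Getzler identities control right multiplication by $\mu$ and $\mu^{2}$, never left multiplication by $\mu$. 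Your proposal therefore does not close, and your own analysis in fact exposes that the paper's two-line proof (your first two paragraphs) contains the same unfillable gap, since it applies the Albert criterion to identities verified only at the single element $\mu$ rather than identically on the subalgebra it generates.
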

\begin{proof}
Use the Albert criterion \cite{Albert48} that a power associative algebra over a field $K$ of characteristic $0$ can be given by the identities
\[
\mu^{2}\o\mu = \mu\o\mu^{2}, \quad
(\mu^{2}\o\mu)\o\mu = \mu^{2}\o\mu^{2}.
\]
These identities easily follow from the corresponding Vinberg identities
\[
(\mu,\mu,\mu)=0, \quad
(\mu^{2},\mu,\mu)=0.
\tag*{\qed}
\]
\renewcommand{\qed}{}
\end{proof}

\begin{defn}[Gerstenhaber brackets and Jacobiator]
The \textit{Gerstenhaber brackets} $[\cdot,\cdot]$ and \textit{Jacobiator} $J$ are defined in $\Com C$ by
\begin{align*}
&[f,g]
\=[f,g]_G
\=
\braket{f|g}-(-1)^{|f||g|}\braket{g|f}, \quad 
|[\cdot,\cdot]|=0, 
\\
&J(f\t g\t h)
\=(-1)^{|f||h|}[[f,g],h]+(-1)^{|g||f|}[[g,h],f]+(-1)^{|h||g|}[[h,f],g],\quad |J|=0.
\end{align*}
The \textit{commutator algebra} of $\Com C$ is denoted as $\Com^{-}\!C\=(C,[\cdot,\cdot])$. One can easily see that $[\mu,\mu]=2\mu^2$.
\end{defn}

\begin{thm}[generalized Jacobi identity, cf \cite{Akivis74}] 
In $\Com^{-}C$ the generalized Jacobi identity holds,
\begin{align*}
J(f\t g\t h)
&=
  (-1)^{|f||h|}[(f,g,h)-(-1)^{|g||h|}(f,h,g)]
+(-1)^{|g||f|}[(g,h,f)-(-1)^{|h||f|}(g,f,h)]\\
&+(-1)^{|h||g|}[(h,f,g)-(-1)^{|f||g|}(h,g,f)].
\end{align*}
\end{thm}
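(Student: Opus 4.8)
The plan is to treat this as a purely formal identity in the graded non-associative algebra $\Com C=(C,\bul)$: both the Jacobiator $J$ and all six associators on the right-hand side are polynomial expressions in the single bilinear product $\bul$, so I would expand each side fully in terms of $\bul$ and check that the two expansions coincide term by term. Note that the Vinberg identity is \emph{not} needed for this step — the statement merely rewrites $J$ as a fixed combination of associators and holds in any graded bilinear algebra; it is only the subsequent vanishing corollary (that $\Com^{-}C$ is a Lie algebra) that invokes Vinberg.

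First I would substitute the definition of $J$ and expand the inner brackets by graded skew-symmetry, $[a,b]=a\bul b-(-1)^{|a||b|}b\bul a$, together with bilinearity of $\bul$. The one point requiring care is the degree of a bracket: since $\braket{f|g}\in C^{f+|g|}$ one has $|f\bul g|=|f|+|g|$ and hence $|[f,g]|=|f|+|g|$, which is the Koszul exponent appearing in the outer bracket
\[
[[f,g],h]=[f,g]\bul h-(-1)^{(|f|+|g|)|h|}\,h\bul[f,g].
\]
Expanding the two inner brackets inside this then yields four $\bul$-monomials for each of the three cyclic summands of $J$, i.e.\ twelve signed terms of the shape $(x\bul y)\bul z$ or $x\bul(y\bul z)$.

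Next I would expand the right-hand side: each associator $(x,y,z)=(x\bul y)\bul z-x\bul(y\bul z)$ contributes two $\bul$-monomials, so the six associators give twelve signed terms as well. The proof is completed by matching the two lists of twelve monomials. For instance the term $(-1)^{|f||h|}(f\bul g)\bul h$ coming from $[f,g]\bul h$ matches the leading monomial of $(-1)^{|f||h|}(f,g,h)$; the term $-(-1)^{(|f|+|g|)|h|+|f||h|}h\bul(f\bul g)$ arising from the outer skew-symmetrization of the first summand simplifies, using $(-1)^{2|f||h|}=1$, to $-(-1)^{|g||h|}h\bul(f\bul g)$ and matches the $h\bul(f\bul g)$ contribution of $(-1)^{|h||g|}(h,f,g)$ in the third row; and so on for the remaining monomials. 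A useful organizing observation is that both sides are manifestly invariant under the signed cyclic shift $f\to g\to h\to f$ (this is precisely how the three summands of $J$ and the three rows of the right-hand side are built), so it suffices to pin down the signs on one cyclic representative and propagate them by the cyclic symmetry.

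The hard part will be purely the Koszul sign bookkeeping, and in particular keeping the composite degree $|[f,g]|=|f|+|g|$ straight in the exponent of the outer bracket. This is the place where the twelve cross-terms redistribute across the three rows of the right-hand side — the four monomials from a single Jacobiator summand do \emph{not} all land in the same associator row — so a careless sign collapses the wrong pair. I do not expect any conceptual difficulty beyond this; once the signs are tabulated, the term-by-term identification is mechanical.
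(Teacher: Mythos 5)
Your proposal is correct and is essentially the paper's (implicit) argument: the identity is a purely formal consequence of bilinearity, obtained by expanding both sides into twelve $\bul$-monomials using the Koszul convention $|[f,g]|=|f|+|g|$ and matching them, with the Vinberg identity reserved for the subsequent vanishing corollary exactly as you note. The sign bookkeeping you outline (e.g.\ $(-1)^{(|f|+|g|)|h|+|f||h|}=(-1)^{|g||h|}$, and the redistribution of the four monomials of each Jacobiator summand across different associator rows) checks out against the paper's stated signs.
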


By using the Vinberg identity on can now easily see the Lie algebra structure in $C$.

\begin{thm}
$\Com C$ is a graded Lie-admissible algebra, i.e its commutator algebra $\Com^-\!C$ is a graded Lie algebra. The Jacobi identity reads $J=0$.
\end{thm}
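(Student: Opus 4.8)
The plan is to verify directly the two defining axioms of a graded Lie algebra for $\Com^-C=(C,[\cdot,\cdot])$: graded antisymmetry of the Gerstenhaber bracket, and the graded Jacobi identity, which the statement packages as $J=0$. Graded Lie-admissibility of $\Com C$ is then immediate, since that notion is by definition the assertion that the commutator algebra $\Com^-C$ is a graded Lie algebra.

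First I would dispose of graded antisymmetry, which needs only the defining formula $[f,g]=\braket{f|g}-(-1)^{|f||g|}\braket{g|f}$. Forming the combination $[f,g]+(-1)^{|f||g|}[g,f]$ and substituting the definition of $[g,f]$, the two copies of $\braket{g|f}$ cancel directly, and the two copies of $\braket{f|g}$ cancel after using $(-1)^{2|f||g|}=1$, so that $[f,g]=-(-1)^{|f||g|}[g,f]$. No earlier result beyond the definition is needed here.

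The substantive step is $J=0$. Rather than expand the nested brackets $[[f,g],h]$ from scratch, I would invoke the generalized Jacobi identity already established, which rewrites $J(f\t g\t h)$ as the signed cyclic sum
\[
(-1)^{|f||h|}\bigl[(f,g,h)-(-1)^{|g||h|}(f,h,g)\bigr]+\text{(two cyclic analogues)},
\]
each bracketed term being a difference of two associators of $\Com C$ that differ only by a transposition of their last two arguments. The point is that precisely such a difference is killed by the Vinberg identity $(h,f,g)=(-1)^{|f||g|}(h,g,f)$: applied with the arguments $f,g,h$ it gives $(f,g,h)=(-1)^{|g||h|}(f,h,g)$, so the first bracket vanishes identically, and the analogous substitutions $(g,h,f)=(-1)^{|h||f|}(g,f,h)$ and $(h,f,g)=(-1)^{|f||g|}(h,g,f)$ annihilate the other two brackets. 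Summing, $J=0$, and with graded antisymmetry in hand $\Com^-C$ is a graded Lie algebra.

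The only real care required is sign bookkeeping: one must check that the transposition appearing in each bracketed term of the generalized Jacobi identity is exactly the one governed by the matching Vinberg sign $(-1)^{|g||h|}$ (respectively its cyclic shifts), so that the cancellation is exact rather than leaving a residual factor. This is the step where the reduced-degree convention $|f|=f-1$ and the degree-zero normalization $|[\cdot,\cdot]|=0$ must be tracked consistently across all three terms; once the permutation and its sign are matched correctly in each, the vanishing is automatic and no further operadic computation is needed.
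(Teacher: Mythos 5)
Your proposal is correct and follows essentially the same route as the paper: the paper's (implicit) proof is precisely to substitute the Vinberg identity $(h,f,g)=(-1)^{|f||g|}(h,g,f)$ into each of the three bracketed differences of the generalized Jacobi identity, so that every term vanishes and $J=0$. Your additional explicit check of graded antisymmetry from the definition of the bracket is a trivial but harmless supplement to what the paper leaves unstated.
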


Thus,  the Gerstenhaber bracketing is a natural extension of the conventional Lie brackets to operadic systems. 

\begin{cor}
Define $R_f g\=[g,f]$. One has
\begin{align*}
[R_f,R_g]=R_{[g,f]},\quad
R_f [g,h] = (-1)^{|f||h|}[R_f g,h]+[g,R_f h].
\end{align*}
\end{cor}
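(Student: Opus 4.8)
The plan is to recognize both formulas as reformulations of the graded Jacobi identity $J=0$ already established for $\Com^-\!C$, combined with the graded skew-symmetry $[f,g]=-(-1)^{|f||g|}[g,f]$, which follows at once from the definition $[f,g]\=\braket{f|g}-(-1)^{|f||g|}\braket{g|f}$ (using $(-1)^{2|f||g|}=1$). First I would fix the degree bookkeeping: since $\braket{f|g}\in C^{f+|g|}$ one has $|[f,g]|=|f|+|g|$, so the operator $R_f\:g\mapsto[g,f]$ carries reduced degree $|R_f|=|f|$, and the left-hand bracket is the graded commutator of operators
\[
[R_f,R_g]\=R_fR_g-(-1)^{|f||g|}R_gR_f.
\]
Everything then reduces to selecting, for each identity, the correct instance of $J(\cdot\t\cdot\t\cdot)=0$ and cancelling the auxiliary terms by skew-symmetry.

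For the first identity I would evaluate both sides on an arbitrary $h\in C$. Unwinding the definitions gives
\[
[R_f,R_g]h=[[h,g],f]-(-1)^{|f||g|}[[h,f],g],\qquad R_{[g,f]}h=[h,[g,f]].
\]
I would rewrite $[h,[g,f]]=-(-1)^{|h|(|f|+|g|)}[[g,f],h]$ by skew-symmetry, then feed in the instance $J(h\t g\t f)=0$,
\[
(-1)^{|h||f|}[[h,g],f]+(-1)^{|g||h|}[[g,f],h]+(-1)^{|f||g|}[[f,h],g]=0,
\]
replace $[[f,h],g]=-(-1)^{|f||h|}[[h,f],g]$, and multiply through by $(-1)^{|h||f|}$. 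The factor $(-1)^{2|f||h|}=1$ collapses, leaving exactly $[[h,g],f]-(-1)^{|f||g|}[[h,f],g]=-(-1)^{|h||g|+|h||f|}[[g,f],h]$, which is the claimed equality; since $h$ is arbitrary, the operator identity follows.

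The second (Leibniz) identity is handled the same way: expanding gives $R_f[g,h]=[[g,h],f]$ on the left and $(-1)^{|f||h|}[[g,f],h]+[g,[h,f]]$ on the right, and I would match them using $J(g\t h\t f)=0$ together with $[[f,g],h]=-(-1)^{|f||g|}[[g,f],h]$ and $[g,[h,f]]=-(-1)^{|g|(|h|+|f|)}[[h,f],g]$. In both cases the algebraic content is nothing more than the graded Jacobi identity rearranged; the only genuine work — and the main place to slip — is the Koszul sign bookkeeping. I would therefore arrange the computation so that each invoked instance of $J$ produces precisely the three double-bracket monomials that occur, after which every superfluous sign is removed by a single application of skew-symmetry and the repeated collapse $(-1)^{2|x||y|}=1$.
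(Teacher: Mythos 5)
Your proposal is correct and coincides with the paper's intended argument: the corollary is stated there as an immediate consequence of the preceding theorem that $\Com^-C$ is a graded Lie algebra, i.e.\ precisely the graded Jacobi identity $J=0$ combined with graded skew-symmetry, which is exactly what you use. Your sign bookkeeping (including $|R_f|=|f|$, the commutator convention $[R_f,R_g]=R_fR_g-(-1)^{|f||g|}R_gR_f$, and the two instances $J(h\t g\t f)=0$ and $J(g\t h\t f)=0$) checks out.
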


\section{Coboundary operator}

Let $\mu\in C^2$ be a binary non-associative operation. 

\begin{defn}[cobounday operator]
Define the (pre-)\textit{coboundary} operator $\de_\mu\: C\to C$ as
$\de_\mu\=-R_\mu$, i.e
\[
-\de_\mu f  \= [f,\mu], \quad |\de_\mu|=1=|\mu|.
\]
\end{defn}

One again may omit the subscript $\mu$, when clear from the context, thus sometimes denoting $\de\=\de_\mu$.

\begin{rem}[Hochschild  coboundary operator]
For an endomorphism operad $\EE_L$ one can easily recognize the Hochschild coboundary operator as follows:
\begin{align*}
\de_\mu f=
\mu\o(\1_L\t f)
-\sum^{|f|}_{i=0}(-1)^{i}f\o\left(\1_L^{\t i}\t\mu\t\1_L^{\t(|f|-i)}\right)
+(-1)^{|f|}\mu\o(f\t\1_L).
\end{align*}
\end{rem}

\begin{prop}
One has the (right) derivation property in $\Com^-C$:
\[
\de [f,g] = (-1)^{|g|}[\de f,g] + [f,\de g].
\]
\end{prop}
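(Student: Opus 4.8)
The plan is to recognize this as an immediate specialization of the preceding Corollary, namely the right derivation property of $R_f$ on the Gerstenhaber bracket. That Corollary reads $R_f[g,h]=(-1)^{|f||h|}[R_fg,h]+[g,R_fh]$, and the assertion here is precisely its $f=\mu$ instance, rewritten through the definition $\de_\mu=-R_\mu$. So the whole proof amounts to a substitution together with one sign simplification.

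Concretely, first I would set the roles $f=\mu$, $g=f$, $h=g$ in that Corollary to obtain $R_\mu[f,g]=(-1)^{|\mu||g|}[R_\mu f,g]+[f,R_\mu g]$. Since $\mu\in C^2$ has reduced degree $|\mu|=1$, the Koszul sign $(-1)^{|\mu||g|}$ collapses to $(-1)^{|g|}$, giving $R_\mu[f,g]=(-1)^{|g|}[R_\mu f,g]+[f,R_\mu g]$.

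Next I would apply $\de=\de_\mu=-R_\mu$, that is $\de x=-[x,\mu]=-R_\mu x$ for every homogeneous $x$, to both sides. Multiplying the displayed equality by $-1$ and substituting $-R_\mu f=\de f$ and $-R_\mu g=\de g$ on the right-hand side yields $\de[f,g]=(-1)^{|g|}[\de f,g]+[f,\de g]$, which is the claim. The degree accounting is consistent throughout, since $R_\mu$ (hence $\de$) raises degree by $|\mu|=1$, matching $|\de|=1$.

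There is essentially no hard step: the content sits entirely in the already-established Corollary. Should one wish to bypass it, the fact behind that Corollary is that $\Com^-C$ is a graded Lie algebra (shown above via $J=0$), in which the right adjoint action $x\mapsto[x,\mu]$ is automatically a graded derivation of the bracket by the graded Jacobi identity. The only point requiring care is tracking the sign $(-1)^{|g|}$ created by commuting $\mu$, of reduced degree $1$, past the argument $g$; that sign is exactly what separates the \emph{right} derivation property stated here from a naive ungraded Leibniz rule.
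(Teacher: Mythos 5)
Your proof is correct and is exactly the argument the paper intends: the Proposition is stated immediately after the Corollary $R_f[g,h]=(-1)^{|f||h|}[R_fg,h]+[g,R_fh]$ and the definition $\de_\mu\=-R_\mu$, so it is meant as the specialization $f=\mu$ with $|\mu|=1$, which is precisely what you carry out (the paper simply omits writing this out). Your sign bookkeeping, including the collapse of $(-1)^{|\mu||g|}$ to $(-1)^{|g|}$, is accurate.
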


\begin{prop}
One has $\de^{2}_\mu=-\de_{\mu^{2}}$. 
\end{prop}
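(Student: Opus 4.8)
The plan is to unwind both sides into the Gerstenhaber bracket and reduce the claim to a single bracket identity, which will then follow from the graded Jacobi identity $J=0$ established above together with the relation $[\mu,\mu]=2\mu^2$.

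First I would apply the defining relation $-\de_\mu f=[f,\mu]$ twice. Since $\de_\mu f=-[f,\mu]$, this gives
\[
\de_\mu^2 f=\de_\mu\bigl(-[f,\mu]\bigr)=-\bigl[-[f,\mu],\mu\bigr]=[[f,\mu],\mu].
\]
On the other hand, the same definition applied to the operation $\mu^2\in C^3$ yields $-\de_{\mu^2}f=[f,\mu^2]$. Hence the proposition is equivalent to the single identity $[[f,\mu],\mu]=[f,\mu^2]$ for all $f\in C$.

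Next I would specialize the Jacobi identity $J(f\t g\t h)=0$ to $g=h=\mu$. Using the reduced degree $|\mu|=1$, the three summands of $J$ carry the signs $(-1)^{|f|}$, $(-1)^{|f|}$ and $(-1)$, so that
\[
(-1)^{|f|}[[f,\mu],\mu]+(-1)^{|f|}[[\mu,\mu],f]-[[\mu,f],\mu]=0.
\]
I would then substitute $[\mu,\mu]=2\mu^2$ in the middle term and rewrite the last term using graded antisymmetry $[\mu,f]=-(-1)^{|f|}[f,\mu]$, which gives $[[\mu,f],\mu]=-(-1)^{|f|}[[f,\mu],\mu]$. After collecting, the two copies of $[[f,\mu],\mu]$ combine and one is left with $[[f,\mu],\mu]=-[\mu^2,f]$. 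Finally, since $|\mu^2|=2$ is even, graded antisymmetry gives $[\mu^2,f]=-[f,\mu^2]$, converting this into the desired $[[f,\mu],\mu]=[f,\mu^2]$.

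The computation is essentially forced once the reduction is made, so the only real obstacle is disciplined sign bookkeeping: one must consistently use the reduced degrees $|\mu|=1$ and $|\mu^2|=2$ and apply the antisymmetry rule $[g,f]=-(-1)^{|f||g|}[f,g]$ in exactly the right places. The evenness of $|\mu^2|$ is what makes the final antisymmetry step sign-free, and a single misplaced factor of $(-1)$ would spoil the cancellation, so I would verify each sign against the degree conventions fixed at the outset.
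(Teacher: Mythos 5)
Your proof is correct and is in essence the paper's own argument: the paper writes $2\de^{2}_\mu=[\de_\mu,\de_\mu]=-\de_{[\mu,\mu]}=-2\de_{\mu\o\mu}$, which is exactly your element-wise computation packaged at the operator level via the corollary $[R_f,R_g]=R_{[g,f]}$ (itself the graded Jacobi identity), together with $[\mu,\mu]=2\mu^{2}$. Note that, like the paper's calculation, your cancellation of the overall factor $2$ silently requires $\ch K\neq 2$ --- the paper flags this restriction explicitly and refers to \cite{KPS00} for a characteristic-free proof.
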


\begin{proof}
For convenience of the reader, when calculating by assuming restriction $\ch K\neq 2$, one can find a general proof in \cite{KPS00}, 
\begin{align*}
2\de^{2}_\mu 
= [\de_\mu,\de_\mu]
=-\de_{[\mu,\mu]}
=-2\de_{\mu\o\mu}.
\tag*{\qed}
\end{align*}
\renewcommand{\qed}{}
\end{proof}

\begin{cor}
If $\mu^2=0$, then $\de_\mu^2=0$,  which in turn implies that $\IM\de_\mu\subseteq\Ker\de_\mu$. 
\end{cor}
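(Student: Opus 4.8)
The plan is to deduce both assertions directly from the preceding Proposition, which asserts $\de_\mu^2=-\de_{\mu^2}$, supplemented by one small observation about how $\de$ depends on its defining binary operation. First I would record that the assignment $\nu\mapsto\de_\nu$ is $K$-linear in $\nu$: by definition $-\de_\nu f=[f,\nu]$, and the Gerstenhaber bracket $[\cdot,\cdot]$ is bilinear (being assembled from the bilinear flows $\braket{\cdot|\cdot}$), so $[f,\nu]$ is linear in the slot occupied by $\nu$. In particular $\de_0=0$, since the zero operation is sent to the zero map.

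Next I would simply substitute the hypothesis $\mu^2=0$ into the identity $\de_\mu^2=-\de_{\mu^2}$ furnished by the previous Proposition. This yields $\de_\mu^2=-\de_0=0$, which is the first claimed conclusion. No further computation is needed at this step, as the entire analytic content has already been packaged into that Proposition (itself resting on $[\mu,\mu]=2\mu^2$ and the derivation property of $\de$ in $\Com^-C$).

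Finally, the inclusion $\IM\de_\mu\subseteq\Ker\de_\mu$ is the routine cochain-complex consequence of $\de_\mu^2=0$. The argument is that any $g\in\IM\de_\mu$ has the form $g=\de_\mu f$ for some $f\in C$, so that $\de_\mu g=\de_\mu(\de_\mu f)=\de_\mu^2 f=0$, whence $g\in\Ker\de_\mu$. Since $g$ was arbitrary, the inclusion follows.

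I do not expect any genuine obstacle here: the only step demanding a moment of care is the linearity remark $\de_0=0$, which legitimizes reading $\de_{\mu^2}$ as $0$ once $\mu^2=0$. Everything else is a single substitution into an already-established identity followed by the standard observation that a squarezero operator has image contained in its kernel. The real substance of the corollary lives in the Proposition $\de_\mu^2=-\de_{\mu^2}$, so this statement should be presented as an immediate specialization rather than an independent calculation.
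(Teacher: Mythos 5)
Your proposal is correct and follows the same route the paper intends: the corollary is an immediate specialization of the preceding Proposition $\de_\mu^2=-\de_{\mu^2}$, using linearity of $\nu\mapsto\de_\nu$ (so $\de_0=0$) and then the standard square-zero argument that $g=\de_\mu f$ implies $\de_\mu g=\de_\mu^2 f=0$. Your explicit remark justifying $\de_{\mu^2}=0$ from $\mu^2=0$ is a small but welcome piece of care that the paper leaves tacit.
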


\begin{defn}[cohomology]
Let $\mu$ be a binary \textit{associative} operation in $C$. Then the associated cohomology ($\NN$-graded module) is defined as the graded quotient  module $H_\mu(C)\=\Ker\de_\mu/\IM\de_\mu$ with homogeneous components
\[
H_\mu^{n}(C)
\=
\Ker(C^{n}\stackrel{\de_\mu}{\rightarrow}C^{n+1})/
\IM(C^{n-1}\stackrel{\de_\mu}{\rightarrow}C^{n}),
\]
where, by convention, $\IM(C^{-1}\stackrel{\de_\mu}{\rightarrow}C^{0})\=0$. Operations from $Z_\mu(C)\=\Ker \delta_\mu$ are called \textit{cocycles} and from $B_\mu(C)\=\IM\de_\mu$ \textit{coboundaries}. Thus, $H_\mu(C)\=Z_\mu(C)/B_\mu(C)$ and  the standard homological algebra technique  is applicable.
\end{defn}

\begin{rem}
For an endomoprhism operad the construction is called the Hochschild cohomology of an associative algebra.
\end{rem}

\begin{thm}
Let $\mu$ be a binary associative operation in $C$. Then the triple $(H_\mu(C),[\cdot,\cdot],\de_\mu)$ is a differential graded Lie algebra with respect to $[\cdot,\cdot]$-multiplication induced from $\Com C$.
\end{thm}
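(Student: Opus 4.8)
The plan is to recognize that the three structural facts already assembled in the excerpt are precisely the axioms of a differential graded Lie algebra on $C$ itself, and then to transport that structure to cohomology by the standard quotient argument. First I would note that, since $\mu$ is associative, $\mu^2 = 0$, so by the preceding corollary $\de_\mu^2 = -\de_{\mu^2} = 0$; thus $\de_\mu$ is a genuine differential of degree $|\de_\mu| = 1$. Combined with the earlier theorem that $\Com^- C = (C,[\cdot,\cdot])$ is a graded Lie algebra, and the proposition giving the graded (right) derivation rule $\de[f,g] = (-1)^{|g|}[\de f, g] + [f, \de g]$, this already exhibits $(C,[\cdot,\cdot],\de_\mu)$ as a DGLA, so the whole statement reduces to passing to $H_\mu(C)$.

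It remains to check that the bracket descends to $H_\mu(C) = Z_\mu(C)/B_\mu(C)$. First I would verify that the bracket of two cocycles is again a cocycle: if $\de f = \de g = 0$, the derivation rule gives $\de[f,g] = (-1)^{|g|}[\de f, g] + [f, \de g] = 0$, so $[\cdot,\cdot]$ restricts to $Z_\mu(C)$. Next I would show that $B_\mu(C)$ is a (bracket) ideal of $Z_\mu(C)$. For a coboundary $f = \de a$ and a cocycle $g$ with $\de g = 0$, the same rule yields $\de[a,g] = (-1)^{|g|}[\de a, g]$, whence $[f,g] = (-1)^{|g|}\de[a,g] \in B_\mu(C)$; graded antisymmetry, which holds directly from the definition of $[\cdot,\cdot]$, then handles $[g,f]$ as well. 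Consequently the bracket is well defined on cohomology classes.

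Finally, the graded antisymmetry and the graded Jacobi identity $J = 0$, which hold in $\Com^- C$ by the earlier theorem, descend verbatim to the quotient, making $H_\mu(C)$ a graded Lie algebra. Since every class in $H_\mu(C)$ is represented by a cocycle, the induced action of $\de_\mu$ on cohomology is zero, so $(H_\mu(C),[\cdot,\cdot],\de_\mu)$ is a differential graded Lie algebra with trivial differential, equivalently a graded Lie algebra. The step I expect to require the most care is the verification that $B_\mu(C)$ is an ideal inside $Z_\mu(C)$, that is, the coboundary computation above; everything else is a direct inheritance of identities already established in $\Com^- C$.
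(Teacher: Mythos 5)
Your proof is correct and takes essentially the route the paper intends: the theorem is stated there without an explicit proof, immediately after the three ingredients you invoke ($\de_\mu^2=-\de_{\mu^2}$, the right derivation property of $\de_\mu$ in $\Com^-C$, and the graded Lie algebra structure of $\Com^-C$), and your quotient argument (cocycles closed under the bracket, coboundaries forming an ideal against cocycles, identities descending to classes) is the standard assembly of exactly those facts. Your closing observation that the induced differential on $H_\mu(C)$ is trivial is also the right reading of the statement.
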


\begin{rem}[tangent cohomology and Lie theory]
By resuming at this stage, one can state that every binary associative operation $\mu\in C^{2}$ generates a graded Lie algebra $(H_\mu(C),[\cdot,\cdot])$ called the \textit{tangent cohomology} or \textit{infinitesimal algebra} of $\mu$. Its construction is strikingly natural, just as constructing the conventional Lie bracketing for matrices or the tangent Lie algebra of a Lie (transformation) group. 

\medskip
We know that the Lie bracketing is related to the matrix multiplication via the (left and right) Leibniz rules, thus it is natural to post a question about extending these rules to operadic systems.  First observe the \textit{right} Leibniz rule for cup-algebra as follows.
\end{rem}

\begin{thm}[right Leibniz rule in an operadic system]
\label{right-der}
In an operad $C$ on has the \textit{right} Leibniz rule as follows:
\[
\braket{f\u g|h}=f\u\braket{g|h}+(-1)^{|h|g}\braket{f|h}\u g.
\]
\end{thm}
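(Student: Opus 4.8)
The plan is to prove the identity by directly expanding the left-hand flow $\braket{f\u g|h}$ over its ground simplex $\braket{f\u g}$ and then repeatedly applying the composition (associativity) relations of the operad, so as to separate the terms in which $h$ is inserted into the \emph{inner} $g$-slot (which will reassemble into $f\u\braket{g|h}$) from those in which it is inserted into the $f$-slot (which will reassemble into $(-1)^{|h|g}\braket{f|h}\u g$). First I would unfold the definitions. Since $f\u g=(-1)^{f}(\mu\o_0 f)\o_f g\in C^{f+g}$ has reduced degree $|f\u g|=f+g-1$, the flow is
\[
\braket{f\u g|h}=\sum_{k=0}^{|f\u g|}(f\u g)\o_k h
=(-1)^{f}\sum_{k=0}^{f+g-1}\big((\mu\o_0 f)\o_f g\big)\o_k h .
\]

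Next I would apply the composition relations to the outer bracket $\big((\mu\o_0 f)\o_f g\big)\o_k h$, viewed as $(X\o_f g)\o_k h$ with $X=\mu\o_0 f$ and $|X|=f$. The decisive observation is that the first composition is performed at the \emph{maximal} slot $i=f=|X|$, so region $G$ cannot occur and the summation index splits exactly in two: the range $f\le k\le f+|g|$ lands in region $A$, while $0\le k\le f-1$ lands in region $B$. For the $A$-terms the relation yields $X\o_f(g\o_{k-f}h)$, and summing over $k-f=0,\dots,|g|$ reconstitutes the inner flow, so that this half collapses immediately to $(-1)^{f}(\mu\o_0 f)\o_f\braket{g|h}=f\u\braket{g|h}$.

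The $B$-terms require a second pass. Here the $B$-case produces $(-1)^{|g||h|}\big((\mu\o_0 f)\o_k h\big)\o_{f+|h|}g$, and I would then analyze the inner composition $(\mu\o_0 f)\o_k h$ for $0\le k\le|f|$. Because this insertion occurs at slot $i=0$ of the $\mu$-factor, it falls again into region $A$ (the value $k=f$ that would trigger $G$ is excluded by $k\le|f|=f-1$), so it flattens to $\mu\o_0(f\o_k h)$. Summing over $k$ and using linearity to recognize $\sum_k\mu\o_0(f\o_k h)=\mu\o_0(f\bul h)=\mu\o_0\braket{f|h}$, together with the definition of $\u$ evaluated at $\deg(f\bul h)=f+|h|$, lets me rewrite the $B$-sum as a scalar multiple of $\braket{f|h}\u g$.

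The one genuinely delicate point, which I would isolate at the very end, is the sign bookkeeping, since every step distinguishes the true degree from the reduced degree $|\cdot|$ in both the exponents and the slot indices. Collecting the factors contributed by the definition of $\u$, the outer $B$-relation, and the cup taken at degree $f+|h|$, the $B$-sum carries the coefficient $(-1)^{f}(-1)^{|g||h|}(-1)^{f+|h|}=(-1)^{2f+|g||h|+|h|}=(-1)^{|h|(|g|+1)}=(-1)^{|h|g}$, which is precisely the sign asserted in the statement. Once the region memberships at each of the two associativity applications are pinned down and this sign simplification is verified, the $A$-half and the $B$-half add to exactly $f\u\braket{g|h}+(-1)^{|h|g}\braket{f|h}\u g$, establishing the right Leibniz rule. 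The main obstacle is thus not conceptual but combinatorial: keeping the slot indices, the desuspended degrees, and the region boundaries mutually consistent throughout the two-step reduction.
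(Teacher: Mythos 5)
Your proof is correct. The paper itself states this theorem without proof (deferring to the cited references \cite{KPS00,KP01,Ger63}), and your argument is exactly the standard direct verification those sources carry out: expand $\braket{f\u g|h}$ over $0\le k\le f+|g|$, note that with the cup's insertion at the maximal slot $i=f$ the index range splits cleanly into the $A$-region ($f\le k\le f+|g|$, giving $f\u\braket{g|h}$) and the $B$-region ($0\le k\le f-1$, which after a second application of the $A$-relation inside $\mu\o_0 f$ and linearity gives $(\mu\o_0\braket{f|h})\o_{f+|h|}g$). Your region bookkeeping is accurate at both steps (in particular, $G$ is genuinely excluded both times), and the sign computation $(-1)^{f}(-1)^{|g||h|}(-1)^{f+|h|}=(-1)^{|h|g}$ is right, so the two halves assemble to precisely the claimed identity.
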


\begin{rem}
An operad would be a perfectly ideal computational tool provided that also the \textit{left} Leibniz rule would hold, but, unfortunately, as we shall see, in general it does not hold, but holds only in the tangent cohomology.
\end{rem}

\section{Variations of operadic flows and Stokes law}
\label{Stokes}

Associativity of $\smile_\mu$  is clear -- it is implied by the vanishing associator $\mu^{2}=0$.
How about commutativity? To answer this and related questions we must consider \textit{variations} of the operadic flows as follows.

\begin{defn}
\label{var}
Define the \textit{variations} of some low-order simplicial operadic flows by superpositions
\begin{align*}
&\bar{\de}\braket{f|g}
\=\de\braket{f|g}-\braket{f|\de g}-(-1)^{|g|}\braket{\de f|g},\\
&\bar{\de} \braket{h|fg}
\=\de\braket{h|fg}
     -\braket{h|f\de g}
     -(-1)^{|g|}\braket{h|\de f g}
     -(-1)^{|g|+|f|}\braket{\de h|fg},\\
&\bar{\de}_\smile (f\t g) 
\= \de(f\smile g)-f\smile\delta g-(-1)^{g}\delta f\smile g.   
\end{align*}
Generalization to higher finite order flows is evident. We call $\bar{\de}$ the \textit{operadic variational operator}.
\end{defn}
It is not difficult to see that variations of the operadic flows are pairings of the ground simplexes and flows of the corresponding order. To calculate the variations one must accordingly deform the ground simplexes. As a matter of fact, these deformations can be described by using a deformed boundary operator $\bar{\partial}\leftmapsto\partial$,  see \cite{Ger63, KPS00,KP01,KP02} for more detailed exposition. 

\begin{thm}[operadic Stokes law \cite{KPS00,KP01,KP02}, cf \cite{Ger63}]
The variations of the operadic flows result as superpositions over the corresponding deformed  boundaries,
\begin{align*}
\bar{\delta} = \bar{\delta}\big|_{\bar{\partial}}
\quad \text{i.e}\quad
\bar{\delta} \braket{\cdot|\cdot} = \bar{\delta}\big|_{\bar{\partial}{\braket{\cdot}}}  \braket{\cdot|\cdot} ,\quad
\bar{\delta} \braket{\cdot|\cdot\cdot} = \bar{\delta}\big|_{\bar{\partial}{\braket{\cdot\cdot}}}  \braket{\cdot|\cdot\cdot},\quad
\bar{\delta} \braket{\cdot|\cdots} =   \braket{\cdot|\cdots},\quad
\dots
\end{align*}
\end{thm}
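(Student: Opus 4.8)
The plan is to establish the Stokes law order by order, expanding $\delta_\mu$ directly on each flow and then reorganizing with the operad composition relations until the result matches the definition of the variation $\bar\delta$ in Definition \ref{var}. First I would rewrite the coboundary purely in flow language. From $\delta_\mu=-R_\mu$, the definition of the Gerstenhaber bracket, and $|\mu|=1$, one gets $\delta_\mu x=-\braket{x|\mu}+(-1)^{|x|}\braket{\mu|x}$, i.e. $\delta_\mu x=-\sum_{i=0}^{|x|}x\o_i\mu+(-1)^{|x|}(\mu\o_0 x+\mu\o_1 x)$. In the endomorphism operad this reproduces exactly the Hochschild formula recalled above, fixing the normalization. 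The useful feature is that $\delta$ is itself a superposition over a one-dimensional index set of $\mu$-insertions, which can be merged with the ground simplex of the flow it acts on.

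Next I would apply this $\delta$ to a flow $\braket{h|f_1\cdots f_k}$. Since the flow is already a sum over its ground simplex, $\delta\braket{h|f_1\cdots f_k}$ becomes a double sum: over the ground simplex and over the insertion position of $\mu$. Using the three associativity cases $B$, $A$, $G$ of the composition relations, I would sort every term according to whether $\mu$ lands \emph{inside} a single slot $h,f_1,\dots,f_k$ (region $A$ pushes $\mu$ into a ket, regions $B,G$ into the bra) or as a \emph{root}/end-composition at an extreme of a summation range. The internal insertions telescope—by running the formula of the first step backwards inside each slot—into the terms $\braket{h|f_1\cdots\delta f_\ell\cdots f_k}$ carrying precisely the Koszul signs $(-1)^{|f_1|+\cdots+|f_{\ell-1}|}$ that occur with a minus in Definition \ref{var}. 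Subtracting them is exactly the passage from $\delta$ to $\bar\delta$, so $\bar\delta\braket{h|f_1\cdots f_k}$ equals the sum of the surviving extreme-face insertions.

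Finally I would identify those survivors with a superposition over the deformed boundary. The surviving $\mu$-insertions sit at the top or bottom index of each coordinate of the ground simplex, i.e. on the codimension-one faces obtained by deleting an extreme index; collecting them with their induced signs produces the simplicial boundary $\partial$, but with ranges shifted by the desuspension—the $+f$ and $-|f|$ shifts already visible in the definition of $\braket{hf}$ and of the regions $B,A,G$. This shifted boundary is the deformed boundary $\bar\partial$, and the superposition of the flow over it is by definition $\bar\delta\big|_{\bar\partial\braket{\cdots}}$, yielding the first two displayed identities. At the top displayed order the deformed boundary degenerates to a single face, so the superposition returns a single flow and gives $\bar\delta\braket{\cdot|\cdots}=\braket{\cdot|\cdots}$.

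The hard part will be the bookkeeping of the middle step: controlling the Koszul signs and index shifts produced by the three associativity cases, and verifying that every non-extreme $\mu$-insertion telescopes against exactly one internal-derivative term, leaving only the faces. Once the surviving index set is pinned down, its identification with $\bar\partial$ is essentially the definition of the deformed boundary operator; so the entire content of the theorem resides in this sign-and-range matching.
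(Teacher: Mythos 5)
The paper gives you nothing to be checked against here: the theorem is stated without proof, quoted from \cite{KPS00,KP01,KP02} (cf.\ \cite{Ger63}), and even the deformed boundary operator $\bar{\partial}$ is never defined in the paper itself but deferred to those references. The nearest in-paper content is Corollary \ref{OSL}, obtained ``by performing computations.'' Your strategy is exactly the computational route that those references and the paper's surrounding remarks point to: your flow-form of the coboundary, $\de_\mu x=-\braket{x|\mu}+(-1)^{|x|}\braket{\mu|x}$, is correct and your normalization check against the displayed Hochschild formula works; distributing it over the ground simplex, sorting the $\mu$-insertions by the $B/A/G$ composition relations, and cancelling interior insertions against the subtracted terms of Definition \ref{var} is sound. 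Indeed, at first order the sorting reassembles the insertions into $\braket{f|\de g}$ and $(-1)^{|g|}\braket{\de f|g}$ plus exactly two leftover extreme compositions, and these leftovers are the two cup terms of Corollary \ref{OSL}.

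The genuine gap is in your final step, and you half-admit it yourself. You identify the surviving extreme insertions with ``superposition over the deformed boundary'' by declaring that this identification ``is essentially the definition of the deformed boundary operator.'' That is circular: for the statement to be a theorem rather than a definition, $\bar{\partial}$ must be defined first and independently --- as a deformed simplicial boundary acting on the ground simplexes $\braket{\cdot}$, $\braket{\cdot\cdot}$, $\braket{\cdot\cdot\cdot}$, with the degree shifts built in --- and one must then \emph{verify} that the surviving terms coincide, signs included, with the flow paired over $\bar{\partial}\braket{\cdots}$. (The paper is aware of this two-way logical structure: see the remark that one may instead take the Stokes law as the definition of $\bar{\de}$, whereupon Definition \ref{var} becomes the theorem to prove.) Since you also defer the entire $B/A/G$ sign-and-range bookkeeping --- which you yourself call ``the entire content of the theorem'' --- what you have is a correct plan, not a proof. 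A smaller point: the third displayed identity $\bar{\de}\braket{\cdot|\cdots}=\braket{\cdot|\cdots}$ is almost certainly a typo for $\bar{\de}\big|_{\bar{\partial}\braket{\cdot\cdot\cdot}}\braket{\cdot|\cdots}$ (compare the first two); your ``degenerate boundary'' reading rationalizes the typo instead of proving the intended third-order statement over the simplex $\braket{hfg}$.
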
 

\begin{rem}
One may use the operadic Stokes law do define the operadic variational operator $\bar{\de}$, then the above definition \ref{var} turns out to be a theorem to be proved.  
\end{rem}

By performing computations, the result reads as follows.

\begin{cor}[operadic Stokes law {\cite{KPS00,KP01}, cf \cite{Ger63}}]
\label{OSL} 
In an operad $C$, the low-order variations of flows read
\begin{align*}
&(-1)^{|g|}\bar{\de}\braket{f|g}=f\u g-(-1)^{fg}g\u f,\\
&(-1)^{|g|} \bar{\delta} \braket{h|fg}
=\braket{h|f}\u g+(-1)^{|h|f}f\u\braket{h|g}-\braket{h|f\u g},\\
&(-1)^{g}\bar{\de}_\smile (f\t g)=\braket{\mu^{2}|fg}.
\end{align*}
\end{cor}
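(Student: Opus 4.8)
The plan is to turn every variation in Definition \ref{var} into a single signed sum of iterated partial compositions of $f,g,h$ and $\mu$, and then to collapse that sum using the three composition relations $B$, $A$, $G$. The only input I need beyond the definitions is the explicit shape of the coboundary: since $\de_\mu\phi=-[\phi,\mu]$ and $|\mu|=1$, the bracket definition gives $\de_\mu\phi=(-1)^{|\phi|}\braket{\mu|\phi}-\braket{\phi|\mu}$ with $\braket{\mu|\phi}=\mu\o_0\phi+\mu\o_1\phi$ and $\braket{\phi|\mu}=\sum_k\phi\o_k\mu$. Conceptually this is exactly the operadic Stokes theorem just stated, $\bar{\de}=\bar{\de}\big|_{\bar{\partial}}$: the defining subtractions in $\bar{\de}$ are designed so that the \emph{interior} insertions of $\mu$ telescope and cancel, while the uncancelled remainder is a superposition over the \emph{deformed boundary} of the ground simplex. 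I would run the computation directly but organize it around this boundary picture, since it predicts in advance which compositions survive.

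For the first identity I expand $\bar{\de}\braket{f|g}=\de\braket{f|g}-\braket{f|\de g}-(-1)^{|g|}\braket{\de f|g}$. Writing $\braket{f|g}=\sum_i f\o_i g$ and inserting the coboundary formula, $\de\braket{f|g}$ becomes a sum of triple compositions $(f\o_i g)\o_k\mu$ plus the two root terms $\mu\o_0\braket{f|g}$, $\mu\o_1\braket{f|g}$. I classify each $(f\o_i g)\o_k\mu$ by the trichotomy: the $A$-terms rewrite as $f\o_i(g\o_{k-i}\mu)$, i.e. $\mu$ inserted \emph{inside} $g$, and cancel exactly against the $-\sum_l g\o_l\mu$ part of $-\braket{f|\de g}$; the $B$- and $G$-terms put $\mu$ \emph{inside} $f$ (these two regions being equivalent, as noted after the axioms) and cancel against $-(-1)^{|g|}\braket{\de f|g}$. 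The surviving terms are the boundary insertions, where $\mu$ sits at the root carrying $f$ and $g$ as its arguments; reducing them once more by the $A$-relation and using $f\u g=(-1)^f\braket{\mu|fg}$, they assemble into $f\u g-(-1)^{fg}g\u f$, the sign $(-1)^{fg}$ being the Koszul sign for interchanging the two ket-arguments. This is precisely the statement that the deformed boundary of the interval $\braket{f}=[0,|f|]$ consists of its two endpoints.

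The second identity follows the same scheme one dimension higher, over the ground simplex $\braket{hf}=G(h\t f)$, so the survivors now come from the three deformed edges of a $2$-cell rather than the two endpoints of an interval. Expanding $\bar{\de}\braket{h|fg}$ and reducing every quadruple composition by $B$, $A$, $G$, the insertions of $\mu$ interior to $f$, to $g$, or between $h$ and $fg$ cancel against the three correction terms $\braket{h|f\de g}$, $(-1)^{|g|}\braket{h|\de f g}$ and $(-1)^{|g|+|f|}\braket{\de h|fg}$, leaving exactly the three edge contributions $\braket{h|f}\u g$, $(-1)^{|h|f}f\u\braket{h|g}$ and $-\braket{h|f\u g}$. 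For the third identity I would instead start from $f\u g=(-1)^f\braket{\mu|fg}$ and push $\de_\mu$ through the cup directly: the root insertion of the fresh $\mu$ merges with the $\mu$ already carried by the cup into the total composition $\mu^2=\mu\bul\mu$ at the root, producing $\braket{\mu^2|fg}$, while the insertions of the fresh $\mu$ interior to $f$ or $g$ are exactly the terms removed by $f\u\de g$ and $(-1)^g\de f\u g$; collecting signs yields $(-1)^g\bar{\de}_\smile(f\t g)=\braket{\mu^2|fg}$.

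The step I expect to be the real work is neither the trichotomy nor the final assembly but the sign bookkeeping that makes the cancellations exact. Each application of a $B$-, $A$- or $G$-relation carries a Koszul factor such as $(-1)^{|f||g|}$, and each of the two terms of $\braket{\mu|\phi}$ contributes its own $(-1)^{|\phi|}$; these must dovetail precisely with the prefactors $(-1)^{|g|}$, $(-1)^{|g|+|f|}$, $(-1)^{|h|f}$ and the $(-1)^f$ of the cup that are built into Definition \ref{var}. In the boundary language the entire subtlety lives in the replacement of $\partial$ by $\bar{\partial}$: one must check that the deformation shifts the interior faces so that they annihilate the interior insertions and reweights the boundary faces by exactly the cup-product signs. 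Once these signs are pinned down abstractly, all three identities drop out of the single trichotomy simultaneously, so I would track the signs once and for all rather than re-deriving them for each flow.
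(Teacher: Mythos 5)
Your proposal is correct and follows essentially the same route as the paper: the paper obtains this corollary ``by performing computations'' from Definition \ref{var}, the composition relations and the deformed-boundary Stokes law, deferring the detailed bookkeeping to \cite{KPS00,KP01}, which is precisely the expansion-and-trichotomy cancellation scheme you describe. Your identification of the interior insertions of $\mu$ as the terms cancelled by the correction summands, of the surviving root/boundary insertions as the cup-product terms, and of the Koszul sign tracking as the only substantive work matches the structure of those computations.
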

The 2nd formula tells us that, really, the \textit{left} translations of $\Com C$ are \textit{not} the \textit{left} derivations of $\CUP C$. The 3rd one means that the coboundary operator $\delta$ need not be a derivation of $\CUP C$, and the associator $\mu^{2}$ again appears as an obstruction. As a corollary we can state the following.

\begin{thm}
Let $\mu$ be a binary associative operation in $C$. Then the triple $\left(H_\mu(C),\smile_\mu,\de_\mu\right)$ is a differential graded commutative associative algebra with respect to $\smile_\mu$-multiplication induced from $\CUP C$.
\end{thm}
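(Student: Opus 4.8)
The plan is to verify, one at a time, each axiom of a differential graded commutative associative algebra, using $\mu^2=0$ together with the three variational formulas of Corollary \ref{OSL}, and to keep careful track of which axioms already hold on the chains $C$ and which only emerge after passing to the quotient $H_\mu(C)=Z_\mu(C)/B_\mu(C)$. Since $\mu$ is associative we have $\mu^2=0$, hence by the proposition $\de_\mu^2=-\de_{\mu^2}=0$, so $H_\mu(C)$ is a well-defined $\NN$-graded module and $\de_\mu$ induces the zero-square differential on it. First I would check that $\smile_\mu$ descends to cohomology. The third formula of Corollary \ref{OSL}, namely $(-1)^{g}\bar{\de}_\smile(f\t g)=\braket{\mu^2|fg}$, collapses under $\mu^2=0$ to the graded Leibniz rule $\de(f\u g)=f\u\de g+(-1)^{g}\de f\u g$, valid for all $f,g\in C$. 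From this: if $f,g$ are cocycles then $\de(f\u g)=0$, so the cup of cocycles is a cocycle; and if one factor is a coboundary, say $f=\de a$ with $g$ a cocycle, then $f\u g=\pm\,\de(a\u g)\in B_\mu(C)$. Hence $\smile_\mu$ is well defined on classes in $H_\mu(C)$.

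Next I would dispose of associativity and the derivation axiom, both of which hold already at the chain level. By the proposition, the associator of $\smile_\mu$ in $\CUP C$ equals $\braket{\mu^2|fgh}$; since $\mu^2=0$ this vanishes identically, so $\smile_\mu$ is \emph{strictly} associative on $C$, hence on $H_\mu(C)$. The derivation axiom is precisely the graded Leibniz rule derived above, which exhibits $\de_\mu$ as a graded (right) derivation of $\smile_\mu$; together with $\de_\mu^2=0$ this supplies the full ``differential'' structure. Thus associativity, the Leibniz rule, and $\de^2=0$ are all immediate consequences of the single hypothesis $\mu^2=0$, with no quotient required.

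The one property that genuinely requires descending to cohomology is graded commutativity, and I expect this to be the crux. The first formula of Corollary \ref{OSL} reads $(-1)^{|g|}\bar{\de}\braket{f|g}=f\u g-(-1)^{fg}g\u f$, while by Definition \ref{var} one has $\bar{\de}\braket{f|g}=\de\braket{f|g}-\braket{f|\de g}-(-1)^{|g|}\braket{\de f|g}$. For cocycles $f,g$ the two correction terms vanish, leaving $\bar{\de}\braket{f|g}=\de\braket{f|g}$, which is manifestly a coboundary. Therefore $f\u g-(-1)^{fg}g\u f=(-1)^{|g|}\de\braket{f|g}\in B_\mu(C)$, so in $H_\mu(C)$ we obtain $[f][g]=(-1)^{fg}[g][f]$. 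In contrast to associativity, this identity is \emph{false} on the chains in general — the graded commutator is a nonzero coboundary — so skew-symmetry of the cup product is a strictly cohomological phenomenon, governed here by the operadic Stokes law through the order-two flow $\braket{f|g}$.

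Assembling the pieces — $\de_\mu^2=0$, well-definedness of $\smile_\mu$ on classes, strict associativity, the graded Leibniz rule, and cohomological graded commutativity — shows that $(H_\mu(C),\smile_\mu,\de_\mu)$ is a differential graded commutative associative algebra, completing the argument. The main obstacle is not any single computation but recognizing the asymmetry among the axioms: associativity and the Leibniz rule are chain-level identities forced by $\mu^2=0$, whereas commutativity appears only after quotienting by $B_\mu(C)$, precisely because the graded commutator of two cocycles is the $\de_\mu$-image of their flow $\braket{f|g}$.
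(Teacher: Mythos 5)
Your proof is correct and follows essentially the same route as the paper, which presents this theorem precisely as a corollary of the operadic Stokes law (Corollary \ref{OSL}) together with the earlier proposition identifying the cup-associator with $\braket{\mu^{2}|fgh}$: the third Stokes formula gives the Leibniz rule and well-definedness of $\smile_\mu$ on classes, the proposition gives chain-level associativity, and the first Stokes formula gives graded commutativity modulo coboundaries. Your explicit bookkeeping of which axioms hold on chains versus only in cohomology is exactly the content the paper leaves implicit.
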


By combining  the 2nd formula from Theorem \ref{OSL} with Theorem \ref{right-der} we obtain the operadic Stokes law in terms of the Gerstenhaber brackets as follows.

\begin{cor}[operadic Stokes law for Gerstenhaber brackets]
\label{second*} In an operad $C$, one has
\[
(-1)^{|g|}\bar{\delta}\braket{h|fg}
=
[h,f]\u g+(-1)^{|h|f}f\u[h,g]-[h,f\u g].
\]
\end{cor}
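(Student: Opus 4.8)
The plan is to derive this from the already-established second formula of the operadic Stokes law (Corollary \ref{OSL}) by a purely algebraic manipulation, turning the single-argument flows into Gerstenhaber brackets and then using the right Leibniz rule (Theorem \ref{right-der}) to absorb the leftover ``wrong-order'' flows. No new operadic input is required.

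First I would take the second formula of Corollary \ref{OSL}, namely $(-1)^{|g|}\bar{\de}\braket{h|fg}=\braket{h|f}\u g+(-1)^{|h|f}f\u\braket{h|g}-\braket{h|f\u g}$, and rewrite each flow $\braket{h|\cdot}$ via the definition of the bracket. Since $[h,f]=\braket{h|f}-(-1)^{|h||f|}\braket{f|h}$ and similarly for the other pairs, I obtain $\braket{h|f}=[h,f]+(-1)^{|h||f|}\braket{f|h}$, $\braket{h|g}=[h,g]+(-1)^{|h||g|}\braket{g|h}$, and $\braket{h|f\u g}=[h,f\u g]+(-1)^{|h||f\u g|}\braket{f\u g|h}$. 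Substituting these three identities into the Stokes formula immediately produces the three desired terms $[h,f]\u g$, $(-1)^{|h|f}f\u[h,g]$ and $-[h,f\u g]$, together with three remainder terms built from the reversed flows $\braket{f|h}\u g$, $f\u\braket{g|h}$ and $\braket{f\u g|h}$. The statement is then equivalent to the vanishing of the sum of these three remainders.

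The crux is this cancellation. Here I would apply the right Leibniz rule of Theorem \ref{right-der}, $\braket{f\u g|h}=f\u\braket{g|h}+(-1)^{|h|g}\braket{f|h}\u g$, to the single remainder carrying $\braket{f\u g|h}$; this splits it into pieces proportional to $f\u\braket{g|h}$ and to $\braket{f|h}\u g$, exactly matching the shapes of the other two remainders. Collecting the coefficients of $\braket{f|h}\u g$ and of $f\u\braket{g|h}$ separately, I expect each group to vanish, which finishes the proof. The main obstacle is the sign bookkeeping, and the cancellation only works with the correct cup-product degree: from $f\u g\=(-1)^f(\mu\o_0 f)\o_f g$ one checks $f\u g\in C^{f+g}$, so the reduced degree is $|f\u g|=|f|+|g|+1$ rather than $|f|+|g|$. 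With this value, and using $f=|f|+1$, $g=|g|+1$ repeatedly to convert full degrees to reduced ones in the exponents, each collected coefficient reduces modulo $2$ to a single common sign and the remainders cancel identically.
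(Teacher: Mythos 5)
Your proposal is correct and is precisely the argument the paper intends: the paper derives this corollary ``by combining the 2nd formula from Theorem \ref{OSL} with Theorem \ref{right-der}'', which is exactly your substitution of $\braket{h|\cdot}=[h,\cdot]+(\pm)\braket{\cdot|h}$ into the Stokes formula followed by the right Leibniz rule to cancel the reversed-flow remainders. Your sign analysis is also the essential point and checks out: with $|f\u g|=|f|+|g|+1$ one has $(-1)^{|h||f\u g|+|h|g}=(-1)^{|h||f|}$ and $(-1)^{|h|f+|h||g|}=(-1)^{|h|(|f|+|g|+1)}$, so both groups of remainder terms vanish identically.
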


We can now state the Leibniz rule in tangent cohomology.

\begin{thm}[Leibniz rule in tangent cohomology \cite{KP01}, cf. \cite{Ger63}]
Let $\mu$ be a binary associative operation in $C$ and $h,f,g$ are homogeneous elements in the 
tangent  cohomology $H_\mu(C)$ of $\mu$. Then the left Leibniz rule holds in $H_\mu(C)$:
\begin{align*}
\braket{h|f\u g}=\braket{h|f}\u g+(-1)^{|h|f}f\u \braket{h|g},\quad
[h,f\u g]=[h,f]\u g+(-1)^{|h|f} f\u [h,g].
\end{align*}
\end{thm}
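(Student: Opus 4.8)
The plan is to derive both identities directly from the two operadic Stokes laws already established, passing from cochains to cohomology classes on cocycle representatives. First I would choose homogeneous cocycle representatives $h,f,g\in Z_\mu(C)=\Ker\de_\mu$ of the given classes, so that $\de h=\de f=\de g=0$, and record the two chain-level identities that do the work. These are the second formula of Corollary \ref{OSL},
\[
(-1)^{|g|}\bar{\de}\braket{h|fg}=\braket{h|f}\u g+(-1)^{|h|f}f\u\braket{h|g}-\braket{h|f\u g},
\]
and its Gerstenhaber-bracket form, Corollary \ref{second*},
\[
(-1)^{|g|}\bar{\de}\braket{h|fg}=[h,f]\u g+(-1)^{|h|f}f\u[h,g]-[h,f\u g].
\]
Both have the same left-hand side $\bar\de\braket{h|fg}$, so the whole argument reduces to analysing this single variation on cocycles.

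Next I would expand $\bar\de\braket{h|fg}$ by its Definition \ref{var},
\[
\bar{\de}\braket{h|fg}=\de\braket{h|fg}-\braket{h|f\de g}-(-1)^{|g|}\braket{h|\de f\,g}-(-1)^{|g|+|f|}\braket{\de h|fg}.
\]
Since $h,f,g$ are cocycles, the last three terms vanish and only $\de\braket{h|fg}$ survives; hence $\bar\de\braket{h|fg}=\de\braket{h|fg}\in\IM\de_\mu=B_\mu(C)$ is a coboundary. Because $\mu$ is associative we have $\mu^2=0$ and therefore $\de_\mu^2=0$, so $\de\braket{h|fg}$ is simultaneously a cocycle, i.e.\ it represents the zero class in $H_\mu(C)$. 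Feeding this back into the two displayed Stokes laws makes the left-hand sides vanish modulo coboundaries, and the two identities of the theorem drop out. For the bracket version every term is genuinely well defined in $H_\mu(C)$, since the cup product $\u_\mu$ and the bracket $[\cdot,\cdot]$ descend to cohomology by the differential graded commutative-associative and differential graded Lie structures established above; for the flow version the identity is read as an equality modulo $B_\mu(C)$, which is exactly what the coboundary $\de\braket{h|fg}$ provides.

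The step I expect to be the main obstacle is precisely this passage to cohomology for the first (flow) identity: unlike $\u_\mu$ and $[\cdot,\cdot]$, the total composition $\braket{h|\cdot}$ does not descend to $H_\mu(C)$ -- indeed on cocycles $\de\braket{h|f}$ equals, up to sign, the graded cup-commutator $h\u f-(-1)^{hf}f\u h$, so the individual flows $\braket{h|f}$, $\braket{h|f}\u g$ and $f\u\braket{h|g}$ are generally not cocycles. The care needed is to verify that, although no single term is a cocycle, their signed combination coincides with the coboundary $\de\braket{h|fg}$ at the chain level, so that the asserted equality is both meaningful and correct in $H_\mu(C)$. The associativity hypothesis $\mu^2=0$ is what guarantees simultaneously that $\de_\mu^2=0$ (so that coboundaries are null-homologous cocycles) and that $\de_\mu$ is a genuine derivation of $\u_\mu$, and it must be invoked explicitly rather than assumed silently.
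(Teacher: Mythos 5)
Your proposal is correct and takes essentially the same route the paper intends: the theorem is read off from the operadic Stokes laws (Corollaries \ref{OSL} and \ref{second*}) by passing to cocycle representatives, where Definition \ref{var} collapses $\bar{\de}\braket{h|fg}$ to the single coboundary term $\de\braket{h|fg}$, which vanishes in $H_\mu(C)$ because associativity gives $\de_\mu^2=0$. Your added care about interpreting the flow identity as an equality modulo $B_\mu(C)$ (since $\braket{h|\cdot}$ itself does not descend to cohomology, unlike $\smile_\mu$ and $[\cdot,\cdot]$) is a correct and worthwhile clarification of what ``holds in $H_\mu(C)$'' means for that formula.
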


Thus, one can state that the graded commutativity and Leibniz rule in the tangent cohomology of $\mu$ are induced by  the 1st and 2nd order operadic Stokes law in $C$, respectively.  

\section{Gerstenhaber theory \& MOD I}

Now, following Gerstenhaber \cite{Ger63}, the differential calculus  in a linear operadic system  $C$ can be clarified. Select a binary \textit{associative} operation $\mu\in C^2$. Several amazingly nice coincidences happen. Due to associativity $\mu^{2}=0$ one has $\de_\mu^{2}=0$, which implies $\IM\de_\mu\subseteq\Ker\de_\mu$, hence the tangent cohomology space $H_\mu(C)\=\Ker\de_\mu/\IM\de_\mu$  is  \textit{correctly} defined as well as the triple $G_\mu(C)\=(H_\mu(C),\smile_\mu,[\cdot,\cdot])$ with two (induced) algebraic operations in  $H_\mu(C)$.  In close analogy with the conventional matrix calculus, the operation $\smile_\mu$ is associative and (graded) commutative, whereas $[\cdot,\cdot]$ is a (graded) Lie bracketing, and the both operations are related via the Leibniz rule, the latter represents the operadic Stokes law.  In a sense, one has a graded \textit{analogue} of the Poisson algebra that is everywhere used in (classical and quantum) physics. We collect the algebraic properties of the construction in a modified form as follows.

\begin{defn}[Gerstenhaber algebra \cite{Ger63}]
A \textit{Gerstenhaber algebra} is a triple triple $G\=(H,\cdot,[\cdot,\cdot])$ with the following data.
\begin{itemize}
\itemsep-2pt
\item[1)]
$H\=(H^n)_{n\in\ZZ}$ is a sequence of unital $K$-modules $H^n$. The degree of $h\in H^{n}$ is denoted by
$|h|\=n$.
\item[2)]
The pair $(H,[\cdot,\cdot])$ is a graded Lie algebra with multiplication $[\cdot,\cdot]$ of degree $|[\cdot,\cdot]|$.
\item[3)]
The pair $(H,\cdot)$ is a graded commutative associative algebra with multiplication $\cdot$ of degree $|\cdot|$
\item[4)]
The following Leibniz rule holds for homogeneous elements of $H$:
\[
[h,f\cdot g]=[h,f]\cdot g+(-1)^{(|h|+|[\cdot,\cdot]|)(|f|+|\cdot|)} f\cdot[h,g].
\]
\item[5)]
$0\neq |\cdot|-|[\cdot,\cdot]|=1$.
\end{itemize}
\end{defn}

Note that due to the last property, the Poisson algebras are not particular cases of the Gerstenhaber algebras.

\begin{defn}[tangent Gerstenhaber algebra]
Let $\mu\in C^2$ be a binary \textit{associative} operation. The triple $G_\mu(C)\=(H_\mu(C),\smile_\mu,[\cdot,\cdot])$ is called the \textit{tangent} Gerstenhaber algebra of $\mu$.
\end{defn}

The Gerstenhaber theory tells us that the Gerstenhaber algebras can be seen as \textit{classifying} objects of the binary associative operations.

\begin{exam}
(1. Hochschild cohomology) In the Hochschild complex, the Gerstenhaber algebra structure appears \cite{Ger63} in the cohomology of an associative algebra. (2. QFT) Batalin-Vilkovisky (BV) algebra \cite{BV81,BV83}.
\end{exam}

In Sec. \ref{associativity}, we already justified the associativity law $\mu^{2}=0$ as a ground  (bare) symmetry and  that the Nature prefers symmetric evolutionary forms for \textit{rational} evolution. On the other hand, we defined the operadic \textit{variational} operator in the way that variations of the operadic flows are given by superpositions of their \textit{boundary values}. The \textit{rational (observable)} operadic flows are realized by the assumption that operations are \textit{invariant} with respect to translations $\de$, i.e these turn out to be cocycles,
\begin{align*}
\de\mu=-2\mu^2=0,\quad \de h=0,\quad \de f=0,\quad \de g=0, \quad \ldots
\end{align*}
In this case, variations of the operadic  observable flows vanish (modulo coboundary) and the operadic Stokes law reads as the graded anti-commutativity and Leibniz rule in the tangent Gerstenhaber algebra of $\mu$. Lets sum the above as follows.

\begin{MOD1}[rational (cohomological) variational principle]
The operadic systems evolve in  a rational (cohomological) way. In other words, the rational (observable)  operadic flows are realized in the tangent Gerstenhaber algebras of the binary associative operations.
\end{MOD1}

MOD I is an operadic modification of the BRST quantization concept  (see e.g \cite{HT92} as well as the cohomological representation of the classical electric network  \cite{Roth55}) that the quantum physical states are realized in the BRST cohomology. When using a binary associative operation, instead of a Lie algebra (as in the BRST formalism), the (classifying) Gerstenhaber algebra structure is automatically encoded to a particular model, in the tangent cohomology of the associative operation.

\section{Deformations/perturbations \& MOD II}

For an operadic system $C$, let $\mu_0,\mu\in C^{2}$ be two binary non-associative operations. The difference $\w\=\mu_0-\mu$ is called a \textit{deformation} or \textit{perturbation} of $\mu$. 

In \textit{physical} terms, one may consider $\mu$ as a \textit{real} (true, absolute, unperturbed, bare, initial etc) operation vs its \textit{measured} (approximate, perturbed etc)  value $\mu_0$, so that $\w$ is an  operadic \textit{measurement error}.  In physical  measurement processes one may consider the perturbation (measurement error) $\w$ infinitesimally small, but it never vanish, i.e $0\neq\w\to 0$.

To cover wider area of modeling context, one may also consider $\mu$ as a \textit{ground} (unperturbed) operation representing a \textit{ground} (unperturbed) state of an operadic system vs its perturbed (measured) value $\mu_0$.  Besides of the physical measurements, the list of potential sources of the deformations includes but is not limited to
\begin{itemize}
\itemsep-2pt
\item various measurable and unmeasurable perturbations,
\item physical forces, interactions, incl., e.g,  self-interactions,
\item various approximations and estimation errors, incl., e.g, the numerical ones,
\item mutations in biophysical systems.
\end{itemize}
In what follows, it is convenient to use the standard term \textit{perturbations} for all kinds of (known or unknown) deformation sources, in particular, when working with physical models, and when non-associativity is involved as well. In pure mathematical contexts, the term \textit{deformation} is a standard accepted mathematical term.

\medskip
Tacitly assuming $\ch K\neq2$, denote the associators of $\mu$ and $\mu_0$ by 
\[
A\=\mu^{2}=\frac{1}{2}[\mu,\mu], \quad
A_0\=\mu_0^{2}=\frac{1}{2}[\mu_0,\mu_0].
\]
We call $A_0$ the \textit{deformed} (or perturbed) associator. The difference $\Omega\=A_0-A$ is called a deformation or perturbation of the associator $A$. The deformation is called \textit{quasi-associative} if $\Omega=0$ and \textit{associative} if $A=0=A_0$. Again, in a physical measurement process one may consider the perturbation $\Omega$ infinitesimally small, but it never vanish as well, i.e $0\neq\Omega\to 0$.

We already stressed (in Remark \ref{associativity}) the meaning of non-associativity as a result of deformation (perturbation) of symmetry. To find  the perturbation (deformation) equation, calculate the measured associator
\begin{align*}
A_0
&=\frac{1}{2}[\mu_0,\mu_0]\\
&=\frac{1}{2}[\mu+\w,\mu+\w]\\
&=\frac{1}{2}[\mu,\mu]
	+\frac{1}{2}[\mu,\w]
	+\frac{1}{2}[\w,\mu]
	+\frac{1}{2}[\w,\w]\\
&=A
     +\frac{1}{2}(-1)^{|\mu||\w|}[\w,\mu_0]
     +\frac{1}{2}[\w,\mu]
     +\frac{1}{2}[\w,\w]\\
&=A+[\w,\mu]+\frac{1}{2}[\w,\w]\\
&=A-\de_{\mu}\w+\frac{1}{2}[\w,\w].
\end{align*}
Denoting $d\=-\de_{\mu}$, we obtain the  \textit{perturbation (deformation)} equation called the \textit{generalized}  \cite{Paal02} \textit{Maurer-Cartan equation}:
\[
\boxed{
\Omega
\=\underbrace{A_0-A=\mu_0^{2}-\mu^{2}}_{\text{operadic perturbation}}
=\underbrace{d\w+\frac{1}{2}[\w,\w]}_{\text{operadic curvature}}
}
\]
One can see that the operadic perturbation $\Omega$, as an induced deformation of the ground associator $A$, can be seen as an operadic (form of) \textit{curvature} while the deformation $\w$ itself is (in the role of) a \textit{connection}. This observation may be fixed as follows. 

\begin{MOD2}[cf Sabinin \cite{Sab81} and \cite{NeSab00,Kikkawa64,Akivis78,Paal02}]
The associator is an \textit{operadic} equivalent of the differential geometric \textit{curvature} and may be used for representation of perturbations. 
\end{MOD2}

If the deformation $\w$ is quasi-associative, i.e $A=A_0$, we obtain the \textit{Maurer-Cartan equation} as follows:
\[
A=A_0\quad \Longleftrightarrow\quad d\w+\frac{1}{2}[\w,\w]=0.
\]
It tell us that in this case, the deformation $\w$ is itself an associative (modulo coboundary) operation,  because its associator reads $\w^{2}=-d\w$. If a quasi-associative deformation is a coboundary, i.e $\w=d\alpha$, then it is \emph{exactly} associative because it satisfies the \textit{master equation} $[\w,\w]=0$.

Following more the differential geometric analogies, now note that 
\[
-\de_{\mu_0} f \= [f,\mu_0]=[f,\mu+\w]=[f,\mu]+[f,\w] =df+[f,\w].
\]
Hence, it is natural to call $\nabla\=-\de_{\mu_0}$ a \textit{covariant derivation}. One has 
\[
\nabla f = df +[f,\w],\quad 
\nabla^{2}f=[f,A_0].
\]
Note that the condition $\nabla^{2}=0$, if applied,  does not imply that $A_0=0$, but, instead of this that $A_0$ lies in the \textit{center} of $\Com^{-}C$. In particular,
\[
\nabla^{2}=0
\quad \Longrightarrow \quad
dA_0=0
\quad \Longrightarrow \quad
A_0\in\Ker d.
\]

\section{Moduli, operadic dynamics \& MOD III}

Now, consider an associative model $M\subseteq C^{2}$. Collect the binary associative operations in such a model that are "close" to each other, expectedly (by definition) the ones with with \textit{isomorphic} tangent Gerstenhaber  algebras. The corresponding geometric picture is a graded principal fiber bundle $P(M,G,\pi)$, where the \textit{base} $M$ is called a \textit{moduli space} (of deformations, perturbations), $P$ is the \textit{total space} called a \textit{deformation} or \textit{perturbation bundle} over $M$, and $\pi:P\to M$ is the \textit{canonical projection} map, so that all fibers $\pi^{-1}(\mu)$ ($\mu\in M$), as Gerstenhaber algebras, are \textit{isomorphic} to the typical (\textit{classifying}) fiber $G\cong G_{\mu}\=\pi^{-1}(\mu)$. 

One can try to attach various structures on the deformation (perturbation) bundles as well as on the corresponding base moduli spaces, e.g, the topological and differentiable manifold structures, geometrical structures, connections, metrics etc. 

In particular, e.g, one may consider the  \textit{Batalin-Vilkovisky (BV) bundles} with BV-algebra \cite{BV81,BV83} as a typical (classifying) fiber and connections therein.

\medskip
According to contemporary geometrical interpretations of the fundamental physical interactions, the latter can be realized as connections in various principal fiber bundles. 

Thus, to involve dynamics, it is not surprising to use the following guiding principle.

\begin{MOD3}[cf Laudal \cite{Laudal11}]
\label{MOD3}
The \textit{time} (\textit{interval} between the space-time events) is a \textit{measure of change} (deformation, perturbation) i.e a metrics on the moduli space. Connections in the deformation bundles over the moduli spaces   represent perturbations. 
 \end{MOD3}

MOD III means that the physical space-time is presented (modeled) as a moduli space of associative deformations and operations from the tangent Gerstenhaber algebras are (local) \textit{operadic observables}.  In other words, one can also say that the space-time is covered by the \textit{cohomology fields} \cite{KP96}, and their dynamics (change in time)  must be described. The dynamics of an operadic system may be represented by isomorphisms $G\to G_\mu$ where the binary associative operation $\mu$ becomes a dynamical variable, representing the time.

\medskip
One can parametrize deformations of operations by \textit{operadic Lax representations} \cite{Paal07} of the dynamical (Hamiltonian) systems. Then the corresponding moduli spaces appear as the configuration and phase spaces of the classical dynamical and Hamiltonian systems. Some elaborated examples are presented in \cite{PV08,PV09}. In particular, the time evolution of a quantum operadic flow $f$ may be prescribed by the \emph{operadic Heisenberg equation} \cite{Paal07,Paal13} 
\begin{align*}
\frac{\hbar}{i}\frac{\partial f}{\partial t} = [h,f]_G, \quad h\t f \in H_\mu^{1}(C)\t C^{f}.
\end{align*}
The latter tells us that if the first cohomology space of a binary associative operation $\mu$ is trivial, i.e $H_\mu^{1}(C)=0$, then the operadic system $C$ is \textit{static} with respect to $\mu$.

Other aspect concerns appearance and description of non-associativity generated by the \textit{infinitesimal} perturbations. Then, the operadic approximations $\Omega^{2}\approx0$ must be taken into account. First of all, when $0\neq\mu\approx0$, the tangent cohomology space $H_\mu(C)$ of operadic observables will start to \textit{decay}  which reveals in structural changes of an operadic system during the evolutionary or measurement process as a result of the non-associative perturbations - one is a witness of the \textit{dynamical decay of the tangent cohomologies}. To handle such phenomena mathematically, one needs equations to describe the operadic curvature $\Omega$. As soon as the \emph{infinitesimal} perturbations are governed by the quantum laws, the corresponding operadic Heisenberg equation must hold,
\begin{align*}
\frac{\hbar}{i}\frac{\partial\Omega}{\partial t} = [h,\Omega]_G, \quad h\t \Omega \in H_\mu^{1}(C)\t C^{3}.
\end{align*}
In what follows, additional prescriptions are proposed, which follow the gauge theoretic laws of the Yang-Mills type \cite{YM54,HT92,YM50}.

\section{Operadic gauge equations}
\label{OGE}

Let $\w\=\mu_0-\mu\subset C^{2}$ be a \textit{non-associative} deformation. We  assume that the perturbed associator $A_0\=\mu_0^{2}\in C^{3}$ is established by the measurement processes, as an approximation of $A$. The aim its to state an equation for the measurement error $\Omega\=A_0-A$.

We follow the standard differential geometric considerations. Assume that $\ch K\neq2,3$ and differentiate the deformation equation,
\begin{align*}
d\Omega 
&=\p^{2}\w+\frac{1}{2}\p[\w,\w]\\
&=\p^{2}\w+\frac{1}{2}(-1)^{|\p||\w|}[\p\w,\w]
     +\frac{1}{2}[\w,\p\w]\\
&=\p^{2}\w-\frac{1}{2}[\p\w,\w]
     +\frac{1}{2}[\w,\p\w]\\
&=\p^{2}\w-\frac{1}{2}[\p\w,\w]
     -\frac{1}{2}(-1)^{|\p\w||\w|}[\p\w,\w]\\
&=\p^{2}\w-[\p\w,\w].
\end{align*}
Again using the deformation equation, we obtain
\begin{align*}
\p\Omega
&=\p^{2}\w-[\p\w,\w]\\
&=\p^{2}\w-[\Omega-\frac{1}{2}[\w,\w],\w]\\
&=\p^{2}\w-[\Omega,\w]+\frac{1}{2}[[\w,\w],\w].
\end{align*}
It follows from the Jacobi identity that $[[\w,\w],\w]=0$.
Hence, 
\[
\p \Omega=\p^{2}\w-[\Omega,\w].
\]
Now recall that $\p^{2}=-\p_{A}$ ($\neq0$, in general) and one can see that
\begin{align*}
\nabla \Omega 
&\= \p \Omega+[\Omega,\w] \\
&=-\p_{A}\w\\
&=-[A,\w]\\
&=-dA-[A,\w]\\
&=-\nabla A.
\end{align*}
Thus, the perturbed associator $A_0$ satisfies the operadic differential equation called the operadic \textit{Bianchi identity}
\[
\nabla A_0\=dA_0+[A_0,\w]=0.
\]
Due to $\nabla^{2}A_0=[A_0,A_0]=0$, one can see that further differentiation of the Bianchi identity does not produce additional constraints. 

To clarify the (algebraic) meaning of the Bianchi identity and its solvability, note that
\[
0=\p A_0+[A_0,\w]=[A_0,\mu]+[A_0,\mu_0-\mu]=-[A,\mu]=[\mu^{2},\mu]=\mu^{2}\o\mu-\mu\o\mu^{2},                                                                                                                                                                   
\]
so the Bianchi identity strikingly reads as a power-associativity constraint for the composition multiplication $\o:=\braket{\cdot|\cdot}$ from Theorem \ref{Albert} (the Albert criterion),
\[
\mu^{2}\o\mu=\mu\o\mu^{2}.
\]
Hence, the set of all solutions (the general solution) of the operadic Bianchi identity is extremely wide, consisting of all  binary \textit{non-associative} operations in $C$, i.e the whole $C^{2}\subset C$. Certainly, further restriction to   submodules of $C^{2}$ is sensible. 

\medskip
Now, apply the gauge theoretic prescriptions (laws) of the Yang-Mills type \cite{YM54,HT92,YM50}. It is well known that the geometric part of the gauge field equations are presented as a Bianchi identity, thus using the above operadic Bianchi identity $\nabla\Omega=-\nabla A$ is natural. Meanwhile, accepting the guiding modeling principle MOD I, we believe that at least the unperturbed (bare) symmetry is exact, i.e the associativity law $\mu^{2}=A=0$ holds;  the non-associativity $\Omega\neq0$ may only come into play from physical measurements and other  perturbations. Hence, using the operadic Bianchi identity $\nabla\Omega=0$ is natural. 

\medskip
To introduce the second operadic equation for $\Omega$, one must involve a model restricting \textit{operadic "dual"} $\Omega^{\dag}\leftmapsto\Omega$ and an  \textit{operadic current} $\mathcal{J}\in C^{\Omega^{\dag}+1}$. Then, the  \textit{operadic gauge equations} (cf \cite{Paal02}) for a non-associative perturbation $\Omega\in C^{3}$ of the Yang-Mills type read
\[
\boxed{
\nabla \Omega\=d\Omega+[\Omega,\w]=0,\quad
\nabla \Omega^{\dag}\=d\Omega^{\dag}+[\Omega^{\dag},\w]=\mathcal{J}, \quad \text{where } d^{2}=0
}
\]
Note that $\nabla\mathcal{J}=[\Omega^{\dag},\Omega]\in C^{\Omega^{\dag}+2}$. Thus, the natural constraint $[\Omega^{\dag},\Omega]\=0$ is equivalent to the \textit{operadic conservation law} $\nabla\mathcal{J}=0$.

\medskip
Today, there is not much experience in handling the operadic differential equations, except the Maurer-Cartan one  in the algebraic deformation theory. Most probably, the operadic dual $\Omega^{\dag}$ can be further specified by following the Laudal principle (see MOD III, Sec. \ref{MOD3}), at least for the \textit{infinitesimal} perturbations $\Omega\approx0$, i.e for the models near to associative, by elaborating the operadic  approximation methods with  the operadic Heisenberg equation of $\Omega$. The standard \textit{correctness} conditions may be applied in the operadic modeling as well  -  the existence, uniqueness and stability of solutions of the operadic differential equations and also renormalization principles.  

\medskip
Finally, note that the operadic  (anti-)\textit{self-dual} models with \textit{ansatz} $\Omega^{\dag}=\pm \Omega$ (then evidently $\mathcal{J}=0$) can be, in general, generated by binary non-associative operations, i.e by the whole of $C^{2}$, which may be too wide, whereas the (quasi-)associativity constraint $\Omega=0$ may be too restrictive. Instead, one may consider \emph{weakly non-associative} models with \textit{approximate} (anti-)self-dual \textit{ansatz} $\Omega^{\dag}\approx\pm\Omega$ and \textit{weak} operadic current $\mathcal{J}\approx0$. It explains the importance of the non-associative models, in particular, the ones \textit{near} to associative, for physics as well as for other natural sciences and applications.  

\medskip
\footnotesize{The research was in part supported by the Estonian Research Council, Grant ETF9038. The author is grateful to the referees for useful remarks and corrections.}

\par\medskip
\noindent
\footnotesize{
Tallinn University of Technology, Estonia
}
\end{document}